\definecolor{darkred}{rgb}{0.8,0.1,0.1}
\newtheorem{theorem}{\rmfamily\bfseries{Theorem}}[section]
\newtheorem{lemma}[theorem]{\rmfamily\bfseries{Lemma}} 
\newtheorem{proposition}[theorem]{\rmfamily\bfseries{Proposition}}
\theoremstyle{remark}
\newtheorem{remark}{Remark}
\def\un{1\kern-3pt \rm I}
\numberwithin{equation}{section}
\def\1{\mathbbm{1}}
\newcommand{\oN}{{\mathbb N}}
\newcommand{\oR}{{\mathbb R}}
\newcommand{\oC}{{\mathbb C}}
\title{\bf{On the stability and spectral properties of the two-dimensional Brown-Ravenhall operator
with a short-range potential}}
\author{Magno B. Alves,$^{*}$ Oswaldo M. Del Cima$^\dag$ and Daniel H.T. Franco$^{\dag \ddag}$ \vspace{4mm}\\
       $^{*}$ Universidade Federal de Juiz de Fora, Departamento de Matem\'atica,\\
      Campus Universit\'ario, Bairro Martelos,\\
      Juiz de Fora, MG, Brasil, CEP: 36.036-900.\vspace{4mm}\\
      $^\dag$ Grupo de F\'\i sica-Matem\'atica e Teoria Qu\^antica dos Campos,\\
      Universidade Federal de Vi\c cosa, Departamento de F\'\i sica,\\
      Av. Peter Henry Rolfs s/n, Campus Universit\'ario,\\
      Vi\c cosa, MG, Brasil, CEP: 36570-900.\vspace{4mm}\\
      $^\ddag$ Centro de Estudos de F\'\i sica Te\'orica,\\
      Setor de F\'\i sica-Matem\'atica,\\
      Rua Rio Grande do Norte 1053/302, Funcion\'arios,\\
      Belo Horizonte, Minas Gerais, Brazil, CEP 30130-131.\vspace{4mm}\\
{\small e-mail: \texttt{magno\underline{~}branco@yahoo.com.br, oswaldo.delcima@ufv.br, daniel.franco@ufv.br}}
}
\date{\today}
\begin{document}

\maketitle

\begin{abstract}
The Brown-Ravenhall operator was initially proposed as an alternative to describe the fermion-fermion interaction
via Coulomb potential and subject to relativity. This operator is defined in terms of the associated Dirac operator
and the projection onto the positive spectral subspace of the free Dirac operator. In this paper, we propose to analyze
a modified version of the Brown-Ravenhall operator in two-dimensions. More specifically, we consider the Brown-Ravenhall
operator with a short-range attractive potential given by a Bessel-Macdonald function (also known as $K_0$-potential)
using the Foldy-Wouthuysen unitary transformation. Initially, we prove that the two-dimensional
Brown-Ravenhall operator with $K_0$-potential is bounded from below when the coupling constant is below a specified
critical value (a property also referred to as stability). A major feature of this model is the fact that it does not
cease to be bounded below even if the coupling constant is above the specified critical value. We also investigate
the nature of the spectrum of this operator, in particular the location of the essential spectrum, and the existence
of eigenvalues, which are either isolated from the essential spectrum or embedded in it.
\end{abstract}

\,\,\,{\bf Keywords}. Brown-Ravenhall operator, Bessel-Macdonald function, stability, spectral analysis.

\,\,\,{\bf Mathematics Subject Classification (2020)}. 35P15, 35Q40, 46N50, 81Q10.

%\tableofcontents

%%%%%%%%%%%%%%%%%%%%%%%%%%%%%%%%%%%%%%%%%%%%%%%%%%%%%%%%%%%%%%%%
\section{Introduction}
\label{Sec1}
\hspace*{\parindent}
%%%%%%%%%%%%%%%%%%%%%%%%%%%%%%%%%%%%%%%%%%%%%%%%%%%%%%%%%%%%%%%%
The quantum electrodynamics in three space-time dimensions (QED$_3$) has been drawn attention, 
since the works by Schonfeld, Deser, Jackiw and Templeton~\cite{schonfeld,DJT}, 
as a potential theoretical framework to be applied to quasi-planar condensed matter 
systems~\cite{MWH}, namely high-$T_{\rm c}$ superconductors~\cite{high-Tc,CWFH}, quantum Hall~\cite{QHE}, 
topological insulators~\cite{topological-insulators}, topological superconductors~\cite{TSCond} 
and graphene~\cite{graphene,graphene1,graphene2,WeWaEm}. Thenceforth,  planar quantum electrodynamics models 
have been studied in many physical configurations: small (perturbative) and large (non perturbative) gauge 
transformations, abelian and non-abelian gauge groups, fermions families, even or odd under parity, compact 
space-times, space-times with boundaries, curved space-times, discrete (lattice) space-times, external fields 
and finite temperatures. In condensed matter systems, quasiparticles usually stem from two-particle (Cooper pairs), 
particle-quasiparticle (excitons) or two-quasiparticle (bipolarons) non-relativistic bound states. 

Regarding the present article and the physics of new ``Dirac materials''\cite{Wehling}, in particular due to the importance
of the Dirac's equation in the description of graphene, together with the fact that there are QED$_3$ models in which,
fermion-fermion, fermion-antifermion or antifermion-antifermion scattering potentials -- mediated by massive scalars or
vector mesons -- can be attractive\footnote{While the obtained scattering potentials for $p$-wave states showed up
repulsives, the $s$-wave states (angular momentum state $\ell=0$) show attractive.} and of $K_0$-type (a Bessel-Macdonald
function)~\cite{CWFH,MWH,graphene2,WeWaEm,Wado1,Emer,Well,Laz,WODC}, here we propose to discuss quantum
relativistic effects by a short-range potential to treat the interaction between fermion-fermion, fermion-antifermion or antifermion-antifermion
in $d=2$, without magnetic field. We deal with the potential theory in spaces of Bessel potentials. More specifically, in this
article we study the two-dimensional Brown-Ravenhall operator, {\em i.e.}, the projection of the Dirac operator perturbed by
a $K_0$-potential (the two-dimensional Brown-Ravenhall operator in $d=2$ perturbed by a Coulomb potential has been
analyzed by Bouzouina~\cite{Bouzo} and Walter~\cite{Wal}). Our main results are as follows: in Section \ref{Sec3}, we prove
that the Brown-Ravenhall operator with $K_0$-potential in $d=2$ is bounded from below when the coupling constant is below
a specified critical value (a property also referred to as stability). We do this following two paths: through relativistic Sobolev
inequality and relativistic Hardy inequality (also referred to as Kato inequality). We provide at the end of the article an
Appendix A, where we address the existence of an inequalitiy involving the Hardy and Sobolev relativistic inequalities that 
we will call Hardy-Sobolev-Maz'ya type inequality (or HSM-type inequality for brevity). We also show that the stability of the
Brown-Ravenhall operator with $K_0$-potential in $d=2$ implies his self-adjointness. Closing this part of the article, we show
that the two-dimensional Brown-Ravenhall operator with $K_0$-potential does not cease to be bounded below even if the
coupling constant is above the specified critical value, showing that a complete implosion will never occur for two relativistic
fermions interacting via an attractive potential of the Bessel-Macdonald type, unlike to usual Brown-Ravenhall operator with
coulombian potential. The nature of the spectrum of this operator in turn is addressed in Section \ref{Sec4}, in particular the
location of the essential spectrum, and the existence of eigenvalues, which are either isolated from the essential spectrum
or embedded in it. The analysis of embedded eigenvalues is based on a simple abstract virial theorem. Due to the purpose
of this paper, for the convenience of the reader, in Appendix B we gather a few facts about Bessel's potential class, of which
the potential $K_0$ is part.

\begin{remark}
The following notations will be used consistently throughout the article: $\boldsymbol{x},\boldsymbol{y},\boldsymbol{z},\ldots$
will denote points of the $d$-dimensional euclidean space $\oR^d$, $|\boldsymbol{x}-\boldsymbol{y}|$ the euclidean
distance of the points $\boldsymbol{x},\boldsymbol{y}$, $|\boldsymbol{x}|=|\boldsymbol{x}-\boldsymbol{0}|$,
$\boldsymbol{p},\boldsymbol{q},\ldots$ points of the dual space, $\boldsymbol{p} \cdot \boldsymbol{x}$ the inner
product of the vectors $\boldsymbol{p}$ and $\boldsymbol{x}$. The gradient $\boldsymbol{\nabla} \Psi$ of a differentiable
function $\Psi$ is $\boldsymbol{\nabla} \Psi=(\partial \Psi/\partial x_1,\ldots,\partial \Psi/\partial x_d)$. 
If $\Psi,\Phi \in L_2(\oR^d)$, then we set $\langle \Psi,\Phi \rangle=\int_{\oR^d} \overline{\Psi}(x) \Phi(x)\,dx$. $\Psi * \Phi$
will denote the convolution of $\Psi$ and $\Phi$, $\widehat{\Psi}$ the Fourier transform of $\Psi$. We are adopting the
following convention for the Fourier transform on $\oR^d$:
\begin{align}
\bigl[{\mathscr F}\Psi\bigr](\boldsymbol{p})&=\widehat{\Psi}(\boldsymbol{p})
=\int_{\oR^d} e^{i \hbar^{-1}\boldsymbol{p} \cdot \boldsymbol{x}}\,{\Psi}(\boldsymbol{x})
\,d\boldsymbol{x}\,\,,
\label{FouTrans}
\end{align}
\begin{align}
\bigl[{\mathscr F}^{-1}\widehat{\Psi}\bigr](\boldsymbol{x})&=\Psi(\boldsymbol{x})
=\frac{1}{(2\pi \hbar)^{d}}\int_{\oR^d} e^{-i \hbar^{-1}\boldsymbol{p} \cdot \boldsymbol{x}}\,\widehat{\Psi}(\boldsymbol{p})
\,d\boldsymbol{p}\,\,,
\label{InvFouTrans}
\end{align}
with $\Psi \in {\mathscr S}(\oR^d)$. Here, by ${\mathscr S}(\oR^d)$ we mean the set of all rapidly decreasing functions
on $\oR^d$. The $\hbar$ is introduced to keep the units consistent with the physical interpretation.
Of course, the invariance of space ${\mathscr S}(\oR^d)$ under the Fourier transform implies that
$\widehat{\Psi} \in {\mathscr S}(\oR^d)$.
\end{remark}

%%%%%%%%%%%%%%%%%%%%%%%%%%%%%%%%%%%%%%%%%%%%%%%%%%%%%%%%%%%%%%%%
\section{The Dirac-Bessel-Macdonald operator restricted to its positive spectral subspace in $d=2$}
\label{Sec2}
\hspace*{\parindent}
%%%%%%%%%%%%%%%%%%%%%%%%%%%%%%%%%%%%%%%%%%%%%%%%%%%%%%%%%%%%%%%%
The Brown-Ravenhall operator~\cite{BR} was initially proposed as an alternative to describe the fermion-fermion interaction
via Coulomb potential and subject to relativity. This operator is defined in terms of the associated Dirac operator
and the projection onto the positive spectral subspace of the free Dirac operator. In what follows, we consider a version
of the two-dimensional Brown-Ravenhall operator with the $K_0$-potential (a Bessel-Macdonald function)
\begin{align}
\boldsymbol{\cal B}(\boldsymbol{x})
=\Lambda_+\bigl(\boldsymbol{\cal D}_0(\boldsymbol{x})-\gamma K_0(\beta |\boldsymbol{x}|)\bigr)\Lambda_+\,\,,
\label{OpBR}
\end{align}
where $\gamma$ is the coupling parameter taken to be non-negative. The constants $\gamma$
and $\beta$ shall depend on some model parameters, like coupling constants, characteristic lengths, mass parameters or
vacuum expectation value of a scalar field. From expression $V(\boldsymbol{x})=-\gamma K_0(\beta |\boldsymbol{x}|)$ we
see that $\gamma$ has energy dimension and gives us an energy scale for the interaction among the two particles. In turn,
the parameter $\beta$ has inverse length dimension, thus fixing a length scale, an interaction range, which is related to
the mass of the boson-mediated quantum exchanged during the two particle
scattering~\cite{MWH,CWFH,graphene2,Wado1,Emer,Well,Laz,WODC}

In (\ref{OpBR}) the notation is as follows:
\begin{enumerate}
\item The operator $\boldsymbol{\cal D}_0$ is the {\em free Dirac operator} in $d=2$; it is a first order operator acting on {\em spinor}-valued functions
$\Psi(\boldsymbol{x})=(\psi_1(\boldsymbol{x}),\psi_2(\boldsymbol{x}))$, with $2$ components, of the space variable
$\boldsymbol{x}=(x_1,x_2)$. We denote by $\oC^2$ the $2$-dimensional complex vector space in which the values
of $\Psi(\boldsymbol{x})$ lie. $\boldsymbol{\cal D}_0$ has the form
\begin{align*}
\boldsymbol{\cal D}_0=-i \hbar c\, \boldsymbol{\sigma} \cdot \boldsymbol{\nabla}+m c^2 \sigma_3
=-i \hbar c \left(\sigma_1 \frac{\partial}{\partial x_1}+\sigma_2 \frac{\partial}{\partial x_2}\right)
+mc^2 \sigma_3\,\,.
\end{align*}
where $\hbar$ is the Planck constant, $m > 0$ is the mass of the fermionic particle under consideration, $c$ is the velocity
of light and $\boldsymbol{\sigma}=(\sigma_1;\sigma_2)$ and $\sigma_3$ are the Pauli $2 \times 2$-matrices
\begin{align*}
\sigma_1=
\begin{pmatrix}
0 & 1 \\[3mm]
1 & 0
\end{pmatrix}\,\,,\quad
\sigma_2=
\begin{pmatrix}
0 & -i \\[3mm]
i & 0
\end{pmatrix}\,\,,\quad
\sigma_3=
\begin{pmatrix}
1 & 0 \\[3mm]
0 & -1
\end{pmatrix}\,\,.
\end{align*}
The $\sigma_j$ matrices are introduced in view of making the Dirac operator a square root of the Laplace operator;
they satisfy by construction the following anti-commutating relations:
\[
\sigma_j \sigma_k+\sigma_k \sigma_j=2 \delta_{jk} {\un}_{2 \times 2}\,\,,
\quad j,k=1,2\,\,.
\]

\begin{remark}
The free Dirac operator $\boldsymbol{\cal D}_0$ is essentially self-adjoint on the dense subspace
$C_0^\infty(\oR^2 \setminus \{0\};\oC^2)$ and self-adjoint on the Sobolev space
$\boldsymbol{\mathfrak{Dom}}(\boldsymbol{\cal D}_0)=H^1(\oR^2;\oC^2)$, its spectrum is given by
\[
\sigma(\boldsymbol{\cal D}_0)=(-\infty,-mc^2] \cup [mc^2,+\infty)\,\,,
\]
and it has as form domain the space $\boldsymbol{\mathfrak{Q}}(\boldsymbol{\cal D}_0)=H^{1/2}(\oR^2;\oC^2)$
(see for example~\cite[Chapter 7]{LM} for more details on the spaces $H^1(\oR^2;\oC^2)$ and $H^{1/2}(\oR^2;\oC^2)$).
Naturally, the negative spectrum is associated with antiparticles, in relativistic theories.
\end{remark}

\item $\Lambda_+ \overset{\rm def.}{=} \chi_{(0,\infty)}\bigl(\boldsymbol{\cal D}_0\bigr)$, where $\chi_{(0,\infty)}$
is the characteristic function of the interval $(0,+\infty)$, denotes the projection of $L_2(\oR^2;\oC^2)$ onto the
positive spectral subspace of $\boldsymbol{\cal D}_0$; namely,
\begin{align*}
\Lambda_+=\frac{1}{2}
\left({\un}_{2 \times 2}
+\frac{-i \hbar c\,\boldsymbol{\sigma} \cdot \boldsymbol{\nabla}+m c^2 \sigma_3}
{\sqrt{-\hbar^2 c^2\,\Delta+m^2 c^4}}\right)\,\,,
\end{align*}
where $\Delta$ is the laplacian operator on $\oR^2$. Note that 
$\boldsymbol{\cal D}_0 \Lambda_+=\Lambda_+ \boldsymbol{\cal D}_0=\sqrt{-\hbar^2 c^2\,\Delta+m^2 c^4}\;\Lambda_+$.
The last equality is a consequence of the following fact: in Fourier variables, the projector $\Lambda_+$ is a multiplication
operator given by the following expression:
\begin{align*}
\widehat{\Lambda}_+
=\frac{1}{2}
\left({\un}_{2 \times 2}
+\frac{c\,\boldsymbol{\sigma} \cdot \boldsymbol{p}+m c^2 \sigma_3}
{\sqrt{c^2\,|\boldsymbol{p}|^2+m^2 c^4}}\right)\,\,.
\end{align*}

Hence, the Dirac-Bessel-Macdonald operator restricted to its positive spectral subspace, or
the Brown-Ravenhall operator with the $K_0$-potential, is given formally as
\begin{align*}
\boldsymbol{\cal B}(\boldsymbol{x})
=\Lambda_+ \sqrt{-\hbar^2 c^2\,\Delta+m^2 c^4}\;\Lambda_+
-\gamma\;\Lambda_+ K_0(\beta |\boldsymbol{x}|) \Lambda_+\,\,,
\end{align*}
acting in $L_2(\oR^2;\oC^2)$, or, equivalently
\begin{align*}
\boldsymbol{\cal B}(\boldsymbol{x})
=\Lambda_+ \sqrt{-\hbar^2 c^2\,\Delta+m^2 c^4}
-\gamma\;\Lambda_+ K_0(\beta |\boldsymbol{x}|)\,\,,
\end{align*}
acting in ${\mathscr H}_+ \overset{\rm def.}{=} \Lambda_+\bigl(L_2(\oR^2;\oC^2)\bigr)$.
\end{enumerate}

%%%%%%%%%%%%%%%%%%%%%%%%%%%%%%%%%%%%%%%%%%%%%%%%%%%%%%%%%%%%%%%%
\section{Boundedness from below and all that}
\label{Sec3}
\hspace*{\parindent}
%%%%%%%%%%%%%%%%%%%%%%%%%%%%%%%%%%%%%%%%%%%%%%%%%%%%%%%%%%%%%%%%
In applications it is often very important to determine the lowest point of the spectrum of a self-adjoint operator.
This problem makes sense only if the operator is bounded from below, since otherwise the spectrum extends to
$-\infty$. In this section, we particularly focus on the problem of the boundedness from below of the Brown-Ravenhall
operator with $K_0$-potential, which is related to the stability of relativistic systems in two dimensions. More precisely,
we are interested in finding the largest value of $\gamma$ such that the quadratic form $\langle \Psi,\boldsymbol{\cal B} \Psi \rangle$ is
bounded from below. In effect, we reach this bounded from below through two paths: through relativistic Sobolev inequality
and relativistic Hardy inequality (also referred to as Kato inequality).

%%%%%%%%%%%%%%%%%%%%%%%%%%%%%%%%%%%%%%%%%%%%%%%%%%%%%%%%%%%%%%%%
\subsection{Preamble: reduction of spinors}
\label{Subsec3a}
\hspace*{\parindent}
%%%%%%%%%%%%%%%%%%%%%%%%%%%%%%%%%%%%%%%%%%%%%%%%%%%%%%%%%%%%%%%%
The first step in order to prove the boundedness from below of the Brown-Ravenhall operator with $K_0$-potential is
a reduction of spinors. We will follow the same strategy as Zelati-Nolasco~\cite{ZeNo}: we now introduce the
{\em Foldy-Wouthuysen transformation} (FW), given by a unitary transformation $U_{\rm FW}$ which transforms the
free Dirac operator into the diagonal form (see details in \cite[the case in $d=1+2$]{Binegar} and \cite[the case in 
$d=1+3$]{Thaller})
\begin{align*}
\boldsymbol{\cal D}_{\rm FW}=U_{\rm FW} \boldsymbol{\cal D}_0 U_{\rm FW}^{-1}
=\begin{pmatrix}
\boldsymbol{\cal H}_0 & 0 \\[3mm]
0 & -\boldsymbol{\cal H}_0
\end{pmatrix}
=\sigma_3 \boldsymbol{\cal H}_0\,\,,
\end{align*}
where $\boldsymbol{\cal H}_0 \overset{\rm def.}{=}\sqrt{-\hbar^2 c^2\,\Delta+m^2 c^4}$ is the so-called quasi-relativistic operator
(the relativistic (free) hamiltonian operator). This operator has been extensively studied for a long time (we refer
to~\cite{Weder1,Weder2,Herbst,DaubLieb,Daub,Nard,Umeda,Raynal}).  

\begin{remark}
With the usual quantization rule $\boldsymbol{p} \mapsto -i \hbar \boldsymbol{\nabla}$, let us recall that to the operator
$\boldsymbol{\cal H}_0$ can be defined for all $\Psi \in H^1(\oR^2;\oC^2)$ as the inverse Fourier transform
of the $L_2$-function $\sqrt{c^2 |\boldsymbol{p}|^2+m^2 c^4}~\widehat{\Psi}(\boldsymbol{p})$ (where $\widehat{\Psi}$
denotes the Fourier transform of $\Psi$). To $\boldsymbol{\cal H}_0$ we can associate the following quadratic form
\begin{align*}
q_{_H}(\Phi,\Psi)
\overset{\rm def.}{=} \langle \Phi,\boldsymbol{\cal H}_0 \Psi \rangle=\frac{1}{(2\pi \hbar)^{2}} \int_{\oR^2}
\sqrt{c^2 |\boldsymbol{p}|^2+m^2 c^4}~\overline{\widehat{\Phi}}(\boldsymbol{p}) \widehat{\Psi}(\boldsymbol{p})~d\boldsymbol{p}\,\,,
\end{align*}
which can be extended to all functions $\Phi,\Psi \in \boldsymbol{\mathfrak{Q}}(\boldsymbol{\cal H}_0)=H^{1/2}(\oR^2;\oC^2)$, where
\begin{align*}
H^{1/2}(\oR^2;\oC^2)=\left\{\Psi \in L_2(\oR^2;\oC^2) \mid 
\int_{\oR^2} (1+|\boldsymbol{p}|^2)^{1/2}~|\widehat{\Psi}(\boldsymbol{p})|^2~d\boldsymbol{p} < \infty\right\}\,\,.
\end{align*}
It is known that $\boldsymbol{\cal H}_0$ restricted on $C_0^\infty(\oR^d)$ is essentially self-adjoint, and that
$\sigma(\boldsymbol{\cal H}_0)=\sigma_{\rm ess}(\boldsymbol{\cal H}_0)=[mc^2,\infty)$. An excellent mathematical, comprehensive
and self-contained analysis of the spectral properties of the operators $\boldsymbol{\cal B},\boldsymbol{\cal D}_0$ and $\boldsymbol{\cal H}_0$
(perturbed by the Coulomb potential) can be found in~\cite{BaEv}.
\label{Note01}
\end{remark}

Under the FW-transformation the projector $\Lambda_+$ becomes simply
\begin{align*}
\Lambda_{+{\rm FW}} \overset{\rm def.}{=} U_{\rm FW} \Lambda_+ U_{\rm FW}^{-1}
=\frac{1}{2}
\left({\un}_{2 \times 2}+\sigma_3\right)\,\,.
\end{align*}
Therefore the positive energy subspace for $\boldsymbol{\cal D}_{\rm FW}$ is simply given by
\begin{align*}
{\mathscr H}_+=\left\{\Psi=\binom{\psi}{0} \in L_2(\oR^2;\oC^2) \mid \psi \in L_2(\oR^2;\oC) \right\}
\end{align*}

In the FW-representation the associated quadratic form acting on ${\mathscr H}_+$ is defined by
\begin{align}
\langle \varphi,\boldsymbol{\cal B}_{\rm FW} \psi \rangle_{L_2(\oR^2;\oC)}=
\left\langle \varphi,\boldsymbol{\cal H}_0 \psi \right\rangle_{L_2(\oR^2;\oC)}
+ \langle \varphi,V_{\rm FW} \psi \rangle_{L_2(\oR^2;\oC)}\,\,,
\label{OpBRFW}
\end{align}
for any $\varphi,\psi \in H^{1/2}(\oR^2;\oC)$, where $V_{\rm FW} \psi=Q^*U_{\rm FW}VU_{\rm FW}^{-1}Q\psi$, with
$V(\boldsymbol{x})=-\gamma K_0(\beta |\boldsymbol{x}|)$ and
\begin{align*}
&Q:  \oC \to \oC^2\,\,,\quad Q(z_1)=(z_1,0)\,\,, \\[3mm]
&Q^*:\oC^2 \to \oC\,\,,\quad Q^*(z_1,z_2)=z_1\,\,,
\end{align*}
so that
\begin{align*}
\left\langle \varphi,\boldsymbol{\cal H}_0 \psi \right\rangle_{L_2(\oR^2;\oC)}
&=\left\langle \Lambda_+U_{\rm FW}^{-1}Q\varphi,\boldsymbol{\cal D}_0 \Lambda_+U_{\rm FW}^{-1}Q\psi \right\rangle_{L_2(\oR^2;\oC^2)} \\[3mm]
&=\left\langle \Lambda_+U_{\rm FW}^{-1} \binom{\varphi}{0},\boldsymbol{\cal D}_0 \Lambda_+U_{\rm FW}^{-1} \binom{\psi}{0}\right\rangle_{L_2(\oR^2;\oC^2)}\,\,,
\end{align*}
and
\begin{align*}
\left\langle \varphi,V_{\rm FW}\psi \right\rangle_{L_2(\oR^2;\oC)}
&=\left\langle U_{\rm FW}^{-1}Q\varphi,V U_{\rm FW}^{-1}Q\psi \right\rangle_{L_2(\oR^2;\oC^2)} \\[3mm]
&=\left\langle \Lambda_+U_{\rm FW}^{-1} \binom{\varphi}{0},V\Lambda_+U_{\rm FW}^{-1} \binom{\psi}{0}\right\rangle_{L_2(\oR^2;\oC^2)}\,\,.
\end{align*}
Note that $U_{\rm FW}^{-1}Q\varphi=\Lambda_+U_{\rm FW}^{-1}Q\varphi \in \Lambda_+\bigl(L_2(\oR^2);\oC^2\bigr)$
for any $\varphi \in L_2(\oR^2;\oC)$.

Using the above description, for any $\psi$ in the positive spectral subspace, the expectation
of $\boldsymbol{\cal B}$ in the state $\Psi$ in the FW-representation is associated to the quadratic form
\begin{align}
\langle \psi,\boldsymbol{\cal B}_{\rm FW} \psi \rangle=
\left\langle \psi,\boldsymbol{\cal H}_0 \psi \right\rangle
-\gamma \langle \psi,K_0(\beta |\boldsymbol{x}|) \psi \rangle\,\,.
\label{OpBRFWa}
\end{align}

Hence, the transition from $\Psi \in L_2(\oR^2;\oC^2)$ to the reduced spinor $\psi \in L_2(\oR^2;\oC)$ through the
introduction of the operator $\boldsymbol{\cal B}_{\rm FW}$ is possible because we are working in ${\mathscr H}_+$.
The quadratic form (\ref{OpBRFWa}) defines a self-adjoint operator $\boldsymbol{\cal B}_{\rm FW}$ if we can show that the form
$\langle \psi,\boldsymbol{\cal B}_{\rm FW} \psi \rangle$ is bounded from below. Of course, $\langle \Psi,\boldsymbol{\cal B} \Psi \rangle$
is bounded from below if and only if $\langle \psi,\boldsymbol{\cal B}_{\rm FW} \psi \rangle$ is bounded from below. In that case,
Eq.(\ref{OpBRFWa}) will define a self-adjoint operator $\boldsymbol{\cal B}_{\rm FW}$.

\begin{remark}
The $K_0$-potential satisfies the following properties: $\bf (P1)$ $K_0 \in L_2(\oR^2)+L_\infty(\oR^2)$
and $\bf (P2)$ there exists a constant $C \in (0,1)$ such that $|\gamma \langle \psi,K_0 \psi \rangle| \leqslant C \langle \psi,\boldsymbol{\cal H}_0 \psi \rangle$,
for all $\psi \in H^{1/2}(\oR^2;\oC)$. The validity of $\bf (P1)$ is proven in Lemma \ref{SpecLemma1}, while $\bf (P2)$ 
follows from important known inequalities: the relativistic Sobolev inequality and the relativistic Hardy inequality, respectively.
\end{remark}

%%%%%%%%%%%%%%%%%%%%%%%%%%%%%%%%%%%%%%%%%%%%%%%%%%%%%%%%%%%%%%%%
\subsection{Boundedness from below via relativistic Sobolev inequality}
\label{Subsec3aa}
\hspace*{\parindent}
%%%%%%%%%%%%%%%%%%%%%%%%%%%%%%%%%%%%%%%%%%%%%%%%%%%%%%%%%%%%%%%%
Our first proof of the boundedness from below of the Brown-Ravenhall operator with $K_0$-potential is based 
on the following result (see~\cite[Theorem 2.1]{CoTa} and \cite[Theorem 1.1]{CoTav}):

\begin{theorem}
For $d > 2s$, let $\psi \in H^s(\oR^d)$ and $q=2d/(d-2s)$. Then the following inequality holds:
\begin{align}
\|\psi\|_q^2 \leqslant {\cal S}_{d,s} \|(-\Delta)^{s/2} \psi\|_2^2= {\cal S}_{d,s} \left\langle \psi,(-\Delta)^s \psi \right\rangle\,\,,
\label{CoTa1}
\end{align}
where
\begin{align}
{\cal S}_{d,s}=2^{-2s} \pi^{-s} \frac{\Gamma\left(\frac{d-2s}{2}\right)}{\Gamma\left(\frac{d+2s}{2}\right)}
\left(\frac{\Gamma(d)}{\Gamma(d/2)}\right)^{2s/d}\,\,.
\label{CoTa2}
\end{align}
There is equality in (\ref{CoTa1}) if, and only if, $\psi(x)=A\bigl((x-x_0)^2+\eta^2)\bigr)^{-(n-2)/2}$, where $A \in \oR$,
$\eta > 0$ are fixed constants and with $x_0 \in \oR^n$.
\label{RSItheo}
\end{theorem}

\begin{remark}
The relativistic Sobolev inequality~\cite[Theorem 8.3]{LM} and \cite[Theorem 1.5]{Seri} is a particular case of inequality
(\ref{CoTa1}) when $s=1/2$, and for $s=1/2$ the best constant of Theorem \ref{RSItheo} is given in~\cite{LM}.
\end{remark}

\begin{remark}
The idea that every problem in physics can and should be analyzed in terms of physical dimensions
seems to have been missed by many studies like the one covered in this article. For this reason,
from this point on, both in the analysis of the boundedness from below via relativistic Sobolev inequality,
and the subsequent analysis via relativistic Hardy inequality, we will take into account the physical dimensions.
Here, we are adopting the {\em International System of Units}, since we are working explicitly with $\hbar$ and $c$.
\end{remark}

A well-known result (see~\cite[Theorem 1.66]{BCD}) asserts that $H^s(\oR^d)$ is continuously embedded into
$L_q(\oR^d)$ if $0 \leqslant s < d/2$ and $2 \leqslant q \leqslant 2d/(d-2s)$. In particular, for $s=1/2$ and $d=2$ this
reads as
\begin{align}
H^{1/2}(\oR^2) \hookrightarrow L_q(\oR^2)
\quad \text{if} \quad 2 \leqslant q \leqslant 4\,\,.
\label{Mag1}
\end{align}

Now, let $\psi$ be an arbitrary element in $H^{1/2}(\oR^2)$ and let $K_0(\beta |\boldsymbol{x}|)$ be the 
Bessel-MacDonald potential in $\oR^2$. It is well known that
\begin{align}
K_0(\beta |\boldsymbol{x}|) \in L_1(\oR^2) \cap L_2(\oR^2)\,\,.
\label{Mag2}
\end{align}

Since $\psi \in L_4(\oR^2)$, we have that
\begin{align}
|\psi|^2 \in L_2(\oR^2)\,\,.
\label{Mag3}
\end{align}

From (\ref{Mag2}), (\ref{Mag3}) and H\"older Inequality, we get
\begin{align}
K_0 |\psi|^2 \in L_1(\oR^2)\,\,.
\label{Mag4}
\end{align}
In particular, from (\ref{Mag4}) and Cauchy-Schwarz-Bunjakowski inequality, we have that
\begin{align*}
\langle \psi,K_0(\beta |\boldsymbol{x}|) \psi \rangle
&=\int_{\oR^2} \overline{\psi}(\boldsymbol{x}) K_0(\beta |\boldsymbol{x}|) \psi(\boldsymbol{x})\,d\boldsymbol{x} \\[3mm]
&=\int_{\oR^2} K_0(\beta |\boldsymbol{x}|) |\psi(\boldsymbol{x})|^2\,d\boldsymbol{x} \\[3mm]
&=\int_{\oR^2} \overline{K_0}(\beta |\boldsymbol{x}|) |\psi(\boldsymbol{x})|^2\,d\boldsymbol{x} 
\quad (K_0 \in \oR) \\[3mm]
&=\langle K_0,|\psi|^2 \rangle \\[3mm]
&\leqslant \|K_0\|_2 \||\psi|^2\|_2\,\,.
\end{align*}
Note that
\begin{align*}
\||\psi|^2\|_2=\left(\int_{\oR^2} (|\psi(\boldsymbol{x})|^2)^2\,d\boldsymbol{x}\right)^{1/2}
=\left(\int_{\oR^2} |\psi(\boldsymbol{x})|^4\,d\boldsymbol{x}\right)^{1/2} 
=\|\psi\|_4^2\,\,,
\end{align*}

From (\ref{Mag1}), we have that $\psi \in L_q(\oR^2)$, for $2 \leqslant q \leqslant 4$ since we are taking $\psi \in H^{1/2}(\oR^2)$.
Thus, we can conclude that
\begin{align*}
\langle \psi,K_0(\beta |\boldsymbol{x}|) \psi \rangle
\leqslant \|K_0\|_2 \|\psi\|_4^2\,\,,
\end{align*}
and, for all $\psi \in H^{1/2}(\oR^2;\oC)$. Accordingly, in the international system of units, the relativistic Sobolev inequality 
for $c |\boldsymbol{p}|$~\cite[Theorem 8.4]{LM} takes the form
\begin{align}
\|\psi\|_4^2 \leqslant \frac{1}{\hbar c} {\cal S}_{2,1/2} \left\langle \psi,\sqrt{-\hbar^2 c^2 \Delta}\;\psi \right\rangle\,\,,
\label{RSI}
\end{align}
where, according to Eq.(\ref{CoTa2}),
\begin{align*}
{\cal S}_{2,1/2}=2^{-1} \pi^{-1/2} \frac{\Gamma\left(\frac{1}{2}\right)}{\Gamma\left(\frac{3}{2}\right)}
\left(\frac{\Gamma(2)}{\Gamma(1)}\right)^{1/2}=\pi^{-1/2}
\quad \text{(by the particular values of the gamma function)}\,\,.
\end{align*}
Moreover, according the Table of Integrals of Gradshtein-Ryzhik~\cite[{\bf 6.521}, $6.^*$, p.665]{GR}, in polar coordinates,
\begin{align}
\|K_0\|_2=\left(2 \pi \int_{0}^\infty K^2_{0}(\beta r)\,r\,dr\right)^{1/2}=\frac{\pi^{1/2}}{\beta}\,\,.
\label{Grads}
\end{align}

Therefore, we have that
\begin{align}
\langle \psi,K_0(\beta |\boldsymbol{x}|) \psi \rangle
&\leqslant \frac{1}{\hbar c \beta}
\left\langle \psi,\sqrt{-\hbar^2 c^2 \Delta}\;\psi \right\rangle \nonumber \\[3mm]
&\leqslant \frac{1}{\hbar c \beta}
\left\langle \psi,\sqrt{-\hbar^2 c^2 \Delta+m^2 c^4}\;\psi \right\rangle\,\,.
\label{RSIa}
\end{align}

The whole bounded from below problem of the operator $\boldsymbol{\cal B}_{\rm FW}$ is captured in the estimate 
\begin{align*}
\left\langle \psi,\sqrt{-\hbar^2 c^2\,\Delta+m^2 c^4}\;\psi \right\rangle
-\gamma \langle \psi,K_0(\beta |\boldsymbol{x}|) \psi \rangle \geqslant 0\,\,,
\quad \forall\,\psi \in H^{1/2}(\oR^2;\oC)\,\,,
\end{align*}
and leads immediately to the question for what values of $\gamma$ such an estimate holds. Consequently, the inequality
(\ref{RSIa}) shows that operator $\boldsymbol{\cal B}_{\rm FW}$ is bounded from below if
\begin{align*}
\left(1-\gamma \frac{1}{\hbar c \beta}\right) 
\left\langle \psi,\sqrt{-\hbar^2 c^2\,\Delta+m^2 c^4}\;\psi \right\rangle \geqslant 0\,\,,
\end{align*}
In other words, $\langle \psi,\boldsymbol{\cal B}_{\rm FW} \psi \rangle$ is lower bounded if
\begin{align*}
\gamma \leqslant \gamma_{\rm c}^{\cal S}
=\hbar c \beta\,\,.
\end{align*}
This ends the first proof of the boundedness from below of the Brown-Ravenhall operator with $K_0$-potential.

%%%%%%%%%%%%%%%%%%%%%%%%%%%%%%%%%%%%%%%%%%%%%%%%%%%%%%%%%%%%%%%%
\subsection{Boundedness from below via relativistic Hardy inequality}
\label{Subsec3b}
\hspace*{\parindent}
%%%%%%%%%%%%%%%%%%%%%%%%%%%%%%%%%%%%%%%%%%%%%%%%%%%%%%%%%%%%%%%%
Our second proof of the boundedness from below of the Brown-Ravenhall operator with $K_0$-potential is based 
on the following result~\cite[Lemma 8.2]{LS}:

\begin{theorem}
Let $d \geqslant 2$, and let $\psi$ be a function in $H^{1/2}(\oR^d)$. Then there is the strict inequality
\begin{align}
\int_{\oR^d} \frac{|\psi(\boldsymbol{x})|^2}{|\boldsymbol{x}|}\,d\boldsymbol{x}
< {\cal H}_d^2 \int_{\oR^d} |\boldsymbol{p}| |\widehat{\psi}(\boldsymbol{p})|^2\,d\boldsymbol{p}
={\cal H}_d^2 \int_{\oR^d} \overline{\psi}(\boldsymbol{x}) \sqrt{-\Delta}\;\psi(\boldsymbol{x})\,d\boldsymbol{x}\,\,,
\label{LS1}
\end{align}
where the best possible value of the constant ${\cal H}_d$ is
\begin{align*}
{\cal H}_d=\frac{\Gamma\left(\frac{d-1}{4}\right)}{\sqrt{2}\,\Gamma\left(\frac{d+1}{4}\right)}\,\,.
\end{align*}
The equality is only attained if $\psi=0$, {i.e.}, for any bigger constant the inequality fails for some function
in $H^{1/2}(\oR^d)$.
\label{BaEvLe1}
\end{theorem}

\begin{remark}
This inequality goes back to Kato~\cite[Eq.(V.5.33)]{Kato} and Herbst~\cite[Theorem 2.5]{Herbst}.
See also~\cite[Theorem 1.7.1]{BaEvLe}.
\end{remark}

To apply Theorem \ref{BaEvLe1} to our problem, we start using the following rough estimate
\begin{align}
K_0(\beta |\boldsymbol{x}|)
=\frac {1}{2} \int _{0}^{\infty} e^{-{\frac {\pi \beta^2 |\boldsymbol{x}|^{2}}{\eta}}}
e^{-\frac{\eta}{4\pi}}~\frac{1}{\eta}\,d\eta
&=\frac {1}{2 |\boldsymbol{x}|^\alpha} \int _{0}^{\infty} \frac{|\boldsymbol{x}|^\alpha}{\eta^{\alpha/2}}
e^{-{\frac {\pi \beta^2 |\boldsymbol{x}|^{2}}{\eta}}}
e^{-\frac{\eta}{4\pi}}~\frac{1}{\eta^{1-\alpha/2}}\,d\eta \nonumber \\[3mm]
&\leqslant \frac {1}{2 |\boldsymbol{x}|^\alpha} 
\left(\sup_{t > 0} t^{\alpha/2} e^{-\pi \beta^2 t}\right)
\int _{0}^{\infty} e^{-\frac{\eta}{4\pi}}~\frac{1}{\eta^{1-\alpha/2}}\,d\eta\,\,.
\label{Anous}
\end{align}
Now, since
\begin{enumerate}[(i)]
\item $\displaystyle{t^{\alpha/2} e^{-\pi \beta^2 t} > 0\,\,,\quad \forall\;t >0}$\,\,,
\item $\displaystyle{\lim_{t \to 0^{+}}\left(t^{\alpha/2} e^{-\pi \beta^2 t}\right)=0=\lim_{t \to +\infty}\left(t^{\alpha/2} e^{-\pi \beta^2 t}\right)}$\,\,,
\item $\displaystyle{\frac{d}{dt}\left(t^{\alpha/2} e^{-\pi \beta^2 t}\right)=0 \iff t=\frac{\alpha}{2 \pi \beta^2}}$\,\,,
\end{enumerate}
we conclude that 
\begin{align*}
\sup_{t > 0} t^{\alpha/2} e^{-\pi \beta^2 t}
=t^{\alpha/2} e^{-\pi \beta^2 t}\Bigl|_{t=\frac{\alpha}{2 \pi \beta^2}}
=\left(\frac{\alpha}{2\pi \beta^2}\right)^{\alpha/2} e^{-\alpha/2}\,\,.
\end{align*}
Furthermore, according the Table of Integrals of Gradshtein-Ryzhik~\cite[{\bf 3.381}, $4.$, p.346]{GR}, it follows that
\[
\int _{0}^{\infty} e^{-\frac{\eta}{4\pi}}~\frac{1}{\eta^{1-\alpha/2}}\,d\eta
=(4\pi)^{\alpha/2} \Gamma(\alpha/2)\,\,.
\]
Hence, we have
\begin{align}
K_0(\beta |\boldsymbol{x}|) \leqslant \frac {C_{\alpha,\beta}}{|\boldsymbol{x}|^\alpha}\,\,,
\label{compare}
\end{align}
where
\begin{align*}
C_{\alpha,\beta}=\frac{1}{2} \left(\frac{\alpha}{2\pi \beta^2}\right)^{\alpha/2} 
(4\pi)^{\alpha/2} \Gamma(\alpha/2) e^{-\alpha/2}\,\,.
\end{align*}

If we take $\alpha=1$, then, for $d=2$, it follows from Eq.(\ref{LS1}) that in the international system of units
we have 
\begin{align}
\langle \psi,K_0(\beta |\boldsymbol{x}|) \psi \rangle
&=\int_{\oR^2} \overline{\psi}(\boldsymbol{x}) K_0(\beta \boldsymbol{x}) \psi(\boldsymbol{x})\,d\boldsymbol{x} \nonumber \\[3mm]
&\leqslant \frac{1}{\sqrt{2}} \Gamma(1/2) e^{-1/2} \frac{1}{\beta}
\int_{\oR^2} \frac{|\psi(\boldsymbol{x})|^2}{|\boldsymbol{x}|}\,d\boldsymbol{x} \nonumber \\[3mm]
&< \frac{1}{\sqrt{2}} \Gamma(1/2) e^{-1/2} \left(\frac{\Gamma(1/4)}{\sqrt{2}\,\Gamma(3/4)}\right)^2 \frac{1}{\hbar c \beta}
\left\langle \psi,\sqrt{-\hbar^2 c^2 \Delta}\;\psi \right\rangle \nonumber \\[3mm]
&\leqslant \frac{1}{\sqrt{2}} \Gamma(1/2) e^{-1/2} \left(\frac{\Gamma(1/4)}{\sqrt{2}\,\Gamma(3/4)}\right)^2 \frac{1}{\hbar c \beta}
\left\langle \psi,\sqrt{-\hbar^2 c^2 \Delta+m^2 c^4}\;\psi \right\rangle\,\,.
\label{Kolimited}
\end{align}

Note that we can simplify the expression of the constant in front of $\left\langle \psi,\sqrt{-\hbar^2 c^2 \Delta+m^2 c^4}\;\psi \right\rangle$,
taking into account the relationship that exists between the gamma function and the beta function. Indeed, it follows that
\begin{align*}
\frac{1}{\sqrt{2}} \Gamma(1/2) e^{-1/2} \left(\frac{\Gamma(1/4)}{\sqrt{2}\,\Gamma(3/4)}\right)^2 \frac{1}{\hbar c \beta}
=\frac{[B(1/2,1/4)]^2}{\sqrt{8 \pi e}}\;\frac{1}{\hbar c \beta}
=\frac{[\Gamma(1/4)]^4}{2 (2\pi)^{3/2} e^{1/2}}\;\frac{1}{\hbar c \beta}\,\,.
\end{align*}
In the last equality we use the well-known expression
\begin{align*}
B(x,y)=2 \int_0^{\pi/2} (\cos \varphi)^{2x-1} (\sin \varphi)^{2y-1}\;d\varphi\,\,, 
\end{align*}
and the Table of Integrals of Gradshtein-Ryzhik~\cite[{\bf 3.621}, $7.^*$, p.395]{GR} in order to calculate the value of the 
function $B(1/2,1/4)$.

So, again, the bounded from below problem of the operator $\boldsymbol{\cal B}_{\rm FW}$ is simply to show that
\begin{align*}
\inf\,\langle \psi,\boldsymbol{\cal B}_{\rm FW} \psi \rangle \geqslant 0\,\,, 
\end{align*}
where the infimum is taken over all $\psi \in H^{1/2}(\oR^2;\oC)$. In other words, the operator $\boldsymbol{\cal B}_{\rm FW}$ has to
be positive and, clearly, the necessary condition for stability is captured in the estimate
\begin{align*}
\left\langle \psi,\sqrt{-\hbar^2 c^2\,\Delta+m^2 c^4}\;\psi \right\rangle
-\gamma \langle \psi,K_0(\beta |\boldsymbol{x}|) \psi \rangle \geqslant 0\,\,.
\end{align*}
Consequently, the inequality (\ref{Kolimited}) shows that operator $\boldsymbol{\cal B}_{\rm FW}$ is bounded from below if
\begin{align*}
\left(1-\gamma\;\frac{[\Gamma(1/4)]^4}{2 (2\pi)^{3/2} e^{1/2}}\;\frac{1}{\hbar c \beta}\right)
\left\langle \psi,\sqrt{-\hbar^2 c^2\,\Delta+m^2 c^4}\;\psi \right\rangle \geqslant 0\,\,.
\end{align*}
In other words, $\langle \psi,\boldsymbol{\cal B}_{\rm FW} \psi \rangle$ is lower bounded if
\begin{align*}
\gamma \leqslant \gamma_{\rm c}^{\cal H}
=\left(\frac{2 (2\pi)^{3/2} e^{1/2}}{[\Gamma(1/4)]^4}\right)\;\hbar c \beta\,\,.
\end{align*}
This ends the second proof of the boundedness from below of the Brown-Ravenhall operator with $K_0$-potential.

%%%%%%%%%%%%%%%%%%%%%%%%%%%%%%%%%%%%%%%%%%%%%%%%%%%%%%%%%%%%%%%%
\subsection{The self-adjoint realization}
\label{Subsec3c}
\hspace*{\parindent}
%%%%%%%%%%%%%%%%%%%%%%%%%%%%%%%%%%%%%%%%%%%%%%%%%%%%%%%%%%%%%%%%
The self-adjointness of the operator $\boldsymbol{\cal B}_{\rm FW}$ depends on the basic theorem
(see Simon~\cite[Theorem II.7]{Simon} and \cite[Theorem 2]{Simon1}, and also Reed-Simon~\cite[Theorem X.17]{RS}):

\begin{theorem}[KLMN Theorem]
Let $\boldsymbol{\cal H}_0$ be a positive self-adjoint operator and suppose $\langle \psi,V \psi \rangle$
is a symmetric quadratic form such that $|V|^{1/2} \leqslant \boldsymbol{\cal H}_0^{1/2}$, with
$\boldsymbol{\mathfrak{Q}}(V) \supset \boldsymbol{\mathfrak{Q}}(\boldsymbol{\cal H}_0)$, so that for some
$a < 1$ and some $b \in \oR$,
\begin{align*}
|\langle \psi,V \psi \rangle|
\leqslant a \langle \psi,\boldsymbol{\cal H}_0 \psi \rangle+b\;\|\psi\|_2^2\,\,.
\end{align*}
for all $\psi \in \boldsymbol{\mathfrak{Q}}(\boldsymbol{\cal H}_0)$. Then the quadratic form
$\psi \mapsto \langle \psi,\boldsymbol{\cal H}_0 \psi \rangle+\langle \psi,V \psi \rangle$ defined on
$\boldsymbol{\mathfrak{Q}}(\boldsymbol{\cal H}_0) \cap \boldsymbol{\mathfrak{Q}}(V) \equiv
\boldsymbol{\mathfrak{Q}}(\boldsymbol{\cal H}_0)$ is the form of a self-adjoint operator which is bounded
below by $-b$.
\end{theorem}

On applying KLMN Theorems we derive

\begin{proposition}
If $\gamma < \gamma_{\rm c}$, for $\gamma_{\rm c} \in \{\gamma_{\rm c}^{\cal S},\gamma_{\rm c}^{\cal H}\}$,
$\boldsymbol{\cal B}_{\rm FW}$ is self-adjoint on the form domain
$\boldsymbol{\mathfrak{Q}}(\boldsymbol{\cal H}_0)=H^{1/2}(\oR^2;\oC)$ and bounded below by 0.
\label{selfadjoin}
\end{proposition}

\begin{proof}
For $V(\boldsymbol{x})=-\gamma K_0(\beta|\boldsymbol{x}|)$ and $\boldsymbol{\cal H}_0=\sqrt{-\hbar^2 c^2\,\Delta+m^2 c^4}$,
the relativistic inequalities of Sobolev and Hardy show us that $|V|^{1/2} < \boldsymbol{\cal H}_0^{1/2}$ and that
\begin{align*}
|\langle \psi,V \psi \rangle| \leqslant \frac{\gamma}{\gamma_{\rm c}} \langle \psi,\boldsymbol{\cal H}_0 \psi \rangle
\quad \text{(for all $\psi \in H^{1/2}(\oR^2;\oC)$)}\,\,.
\end{align*}
Therefore, taking into account that $\frac{\gamma}{\gamma_{\rm c}} \leqslant 1$, for
$\gamma_{\rm c} \in \{\gamma_{\rm c}^{\cal S},\gamma_{\rm c}^{\cal H}\}$, if we take $\gamma < \gamma_{\rm c}$
the hypotheses of the KLMN Theorem are satisfied and one deduces that the quadratic form (\ref{OpBRFWa}) defined on
$\boldsymbol{\mathfrak{Q}}(\boldsymbol{\cal H}_0) \cap \boldsymbol{\mathfrak{Q}}(V) \equiv \boldsymbol{\mathfrak{Q}}(\boldsymbol{\cal H}_0)$
is the form of the self-adjoint operator, $\boldsymbol{\cal B}_{\rm FW}$, and the inequality holds with $b=0$, {\em i.e.},
the quadratic form $\langle \psi,\boldsymbol{\cal B}_{\rm FW} \psi \rangle$ is bounded below by 0.
\end{proof}

\begin{remark}
For the critical value $\gamma=\gamma_{\rm c}$, the Friedrichs Extension Theorem~\cite[Theorems X.23]{RS}
guarantees that the quadratic form (\ref{OpBRFWa}) is a closable quadratic form and its closure is the
quadratic form of a unique self-adjoint operator associated with $\boldsymbol{\cal B}_{\rm FW}$, 
which is bounded from below, and the lower bound of its spectrum is the same lower bound of (\ref{OpBRFWa}).
\label{FETheo}
\end{remark}

%%%%%%%%%%%%%%%%%%%%%%%%%%%%%%%%%%%%%%%%%%%%%%%%%%%%%%%%%%%%%%%%
\subsection{Is $\langle \psi,\boldsymbol{\cal B}_{\rm FW} \psi \rangle$ unbounded below if 
$\gamma > \gamma_{\rm c}$?}
\label{Subsec3d}
\hspace*{\parindent}
%%%%%%%%%%%%%%%%%%%%%%%%%%%%%%%%%%%%%%%%%%%%%%%%%%%%%%%%%%%%%%%%
We proved that $V(\boldsymbol{x})=-\gamma K_0(\beta |\boldsymbol{x}|)$ can be controlled by
$\sqrt{-\hbar^2 c^2\,\Delta+m^2 c^4}$, if $\gamma$ is smaller than $\gamma_{\rm c}^{\cal S}$, or $\gamma_{\rm c}^{\cal H}$.
But, bearing in mind hypothetical applications of the model presented here to graphene, or any other two 
dimensional system, it is important to check if $\langle \psi,\boldsymbol{\cal B}_{\rm FW} \psi \rangle$ is unbounded below if
$\gamma > \gamma_{\rm c}$. For this, we will take the form
\begin{align}
\int_{\oR^2} \Bigl\{\overline{\psi}(\boldsymbol{x}) \sqrt{-\hbar^2 c^2\,\Delta+m^2 c^4}\;\psi(\boldsymbol{x})
-\gamma\,\overline{\psi}(\boldsymbol{x}) K_0(\beta|\boldsymbol{x}|) \psi(\boldsymbol{x})\Bigr\}\,d\boldsymbol{x}\,\,,
\label{BRform}
\end{align}
for all $\psi$ in the form domain of $\boldsymbol{\cal H}_0+V$, and, in particular, for all $\psi \in {\mathscr S}(\oR^2;\oC)$.
As usual, over all normalized function $\psi(\boldsymbol{x})$, we shall replace $\psi(\boldsymbol{x})$ by the scaled function
$\psi_\lambda(\boldsymbol{x})=\lambda \psi(\lambda\boldsymbol{x})$, for $\lambda > 0$. A little computation shows that
the normalization is unchanged.

As we explicitly know the function $K_0$ in Fourier space (see Proposition \ref{FTK0}), using Parseval's formula,
we can rewrite (\ref{BRform}) as follows:
\begin{align}
\frac{1}{(2\pi \hbar)^2} \int_{\oR^2} \sqrt{c^2 |\boldsymbol{p}|^2+m^2 c^4}\,|\widehat{\psi}(\boldsymbol{p})|^2\,d\boldsymbol{p}
-\frac{\gamma}{(2\pi \hbar)^4} \int_{\oR^2} \int_{\oR^2} 
\overline{\widehat{\psi}}(\boldsymbol{p}) \frac{2\pi \hbar^2}{|\boldsymbol{p}-\boldsymbol{q}|^2+\hbar^2 \beta^2}\,
\widehat{\psi}(\boldsymbol{q})\,d\boldsymbol{q} d\boldsymbol{p}\,\,.
\label{BRform2}
\end{align}
Naturally, in Fourier space, we must replace $\widehat{\psi}(\boldsymbol{p})$ by the scaled function 
$\widehat{\psi}_\lambda(\boldsymbol{p})=\lambda^{-1} \widehat{\psi}(\lambda^{-1}\boldsymbol{p})$,
for $\lambda > 0$ and a little computation shows that, again, the normalization is unchanged. So with that in mind, it follows that
\begin{align*}
&\frac{1}{(2\pi \hbar)^2} \int_{\oR^2} \sqrt{c^2 |\boldsymbol{p}|^2+m^2 c^4}\,|\widehat{\psi}_\lambda(\boldsymbol{p})|^2\,d\boldsymbol{p}
-\frac{\gamma}{(2\pi \hbar)^4 } \int_{\oR^2} \int_{\oR^2} 
\overline{\widehat{\psi}}_\lambda(\boldsymbol{p}) \frac{2\pi \hbar^2}{|\boldsymbol{p}-\boldsymbol{q}|^2+\hbar^2 \beta^2}\,
\widehat{\psi}_\lambda(\boldsymbol{q})\,d\boldsymbol{q} d\boldsymbol{p} \\[3mm]
&= \frac{\lambda}{(2\pi \hbar)^2} \int_{\oR^2} \sqrt{c^2 |\boldsymbol{p}|^2+\frac{m^2 c^4}{\lambda^2}}\,|\widehat{\psi}(\boldsymbol{p})|^2\,d\boldsymbol{p}
-\frac{\gamma}{(2\pi \hbar)^4 } \int_{\oR^2} \int_{\oR^2} 
\overline{\widehat{\psi}}(\boldsymbol{p}) \frac{2\pi \hbar^2}{|\boldsymbol{p}-\boldsymbol{q}|^2+\frac{\hbar^2 \beta^2}{\lambda^2}}\,
\widehat{\psi}(\boldsymbol{q})\,d\boldsymbol{q} d\boldsymbol{p}\,\,.
\end{align*}
Now, on allowing $\lambda \to \infty$, we obtain that the kinetic term {\em will always dominate}
$V(\boldsymbol{x})=-\gamma K_0(\beta |\boldsymbol{x}|)$, regardless of the value of the coupling constant
$\gamma$. In other words, the quadratic form
$\langle \psi,\boldsymbol{\cal B}_{\rm FW} \psi \rangle$ does not cease to be bounded below even if $\gamma > \gamma_{\rm c}$.
This result is a characteristic of the $K_0$-potential. The reason for this behavior is that, unlike to usual Brown-Ravenhall operator
with coulombian potential, the Brown-Ravenhall operator with $K_0$-potential it is not homogeneous with respect to scalings of
$\oR^2$, {\em i.e.}, the kinetic energy does not have the same behavior under scaling as the Bessel-Macdonald energy
for large momenta. Consequently, a {\em complete implosion will never occur} for two relativistic fermions interacting via an
attractive potential of the Bessel-Macdonald type.

\begin{remark}[{\bf The role of mass in the scaling procedure}]
Note that we lost the masses in the scaling procedure. It is irrelevant for the stability problem, but it will determine 
the energy once the problem is shown to be stable. Indeed, the simple inequality $\sqrt{c^2 |\boldsymbol{p}|^2+m^2 c^4}
\geqslant c |\boldsymbol{p}|$ shows that stability also holds in the case of non-zero mass whenever it holds with
zero mass, as we have seen through relativistic inequalities of Sobolev and Hardy.
\end{remark}

\begin{remark}[{\bf The atomic collapse in graphene}]
Non-perfect graphene has additional potentials; a particular case of importance is the presence of an impurity of
Coulomb type. In this case, there are two distinct regimes: the {\em subcritical regime} and the {\em supercritical regime}.
In the latter case, when $\gamma > \gamma_{\rm c}$, a phenomenon called {\em atomic collapse in graphene} occurs.
This result physically means that the electron is pulled into the nucleus of the interstitial atom. The supercritical instability
in the field of a single charged impurity has been extensively studied in the literature (see~\cite{Gusynin1} and references therein).
On the other hand, as we have seen, since the kinetic energy for large momenta (or for small distances) is always larger
than the Bessel-Macdonald type interaction energy, this implies that the phenomenon of the fall-to-center should not take place
in graphene in the presence of an impurity of Bessel-Macdonald type (a complete implosion would imply the annihilation of matter).
Despite this, by Proposition \ref{selfadjoin} and Remark \ref{FETheo}, only for values of $\gamma \leqslant \gamma_{\rm c}$
the operator $\boldsymbol{\cal B}_{\rm FW}$ is self-adjoint. Thereby, the critical coupling constant as occurring in this paper can
be mathematically thought of as that coupling constant where a natural definition of self-adjointness ceases to exist. In other words,
by analyzing the underlying trapping mechanism for the model proposed here, {\em a priori}, the supercritical instability in graphene
with an impurity of Bessel-Macdonald type should not be necessarily related to the phenomenon of the fall-to-center, but must be
interpreted as the absence of dynamics.
\end{remark}

%%%%%%%%%%%%%%%%%%%%%%%%%%%%%%%%%%%%%%%%%%%%%%%%%%%%%%%%%%%%%%%%
\section{Spectral properties of the operator $\boldsymbol{\cal B}$}
\label{Sec4}
\hspace*{\parindent}
%%%%%%%%%%%%%%%%%%%%%%%%%%%%%%%%%%%%%%%%%%%%%%%%%%%%%%%%%%%%%%%%
In this section, we investigate the nature of the spectrum of the Brown-Ravenhall operator with $K_0$-potential.
We start by locating the essential spectrum, $\sigma_{\rm ess}(\boldsymbol{\cal B})$. Firstly, we note that the map
$\langle \Psi,\boldsymbol{\cal B} \Psi \rangle \to \langle \psi,\boldsymbol{\cal B}_{\rm FW} \psi \rangle$, where
$\Psi \in L_2(\oR^2;\oC^2)$ and $\psi \in L_2(\oR^2;\oC)$, determines a unitary equivalence between the operators
$\boldsymbol{\cal B}$ and $\boldsymbol{\cal B}_{\rm FW}$, hence, they have the same spectral properties.
This leads us to the following result:

\begin{theorem}
Assume that $0 < \gamma \leqslant \gamma_{\rm c}$. Then for the essential spectrum of the Brown-Ravenhall
operator (\ref{OpBR}) one has $\sigma_{\rm ess}(\boldsymbol{\cal B})=\sigma_{\rm ess}(\boldsymbol{\cal B}_{\rm FW})
=[mc^2,\infty)$.
\label{SpecTheo1}
\end{theorem}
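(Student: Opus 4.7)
The plan is to invoke Weyl's theorem on the invariance of the essential spectrum under relatively compact perturbations. By the unitary equivalence recorded at the start of Section~\ref{Sec4}, it suffices to show that $\sigma_{\rm ess}(B_{\rm FW})=[mc^2,\infty)$, where $B_{\rm FW}=H-\delta K_0(\beta|\boldsymbol{x}|)$ acts on $L_2(\oR^2;\oC)$. Since Remark~\ref{Note01} supplies $\sigma_{\rm ess}(H)=[mc^2,\infty)$, the theorem reduces to proving that the multiplication operator $V(\boldsymbol{x}):=-\delta K_0(\beta|\boldsymbol{x}|)$ is $H$-compact, i.e., that $V(H-z)^{-1}$ is compact on $L_2(\oR^2;\oC)$ for some (equivalently, every) $z\in\rho(H)$.

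To establish this compactness, I would exploit the two qualitative features of $K_0$ recalled in Appendix A: the mild logarithmic singularity $K_0(\beta|\boldsymbol{x}|)\sim-\log|\boldsymbol{x}|$ as $|\boldsymbol{x}|\to 0$, and the exponential decay $K_0(\beta|\boldsymbol{x}|)=O(e^{-\beta|\boldsymbol{x}|})$ as $|\boldsymbol{x}|\to\infty$; in particular $K_0(\beta|\cdot|)\in L^p(\oR^2)$ for every $p\in[1,\infty)$. Write $K_0(\beta|\boldsymbol{x}|)=V_R^{\rm in}(\boldsymbol{x})+V_R^{\rm out}(\boldsymbol{x})$ with $V_R^{\rm in}:=K_0(\beta|\boldsymbol{x}|)\,\chi_{\{|\boldsymbol{x}|\leqslant R\}}$, and treat the two pieces separately. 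Since $K_0$ is monotone decreasing on $(0,\infty)$, $\|V_R^{\rm out}\|_\infty=K_0(\beta R)$ vanishes exponentially as $R\to\infty$, giving $\|V_R^{\rm out}(H-z)^{-1}\|_{\rm op}\to 0$. For the local piece, the identification $\boldsymbol{\mathfrak{Dom}}(H)=H^1(\oR^2;\oC)$ makes $(H-z)^{-1}\colon L_2(\oR^2;\oC)\to H^1(\oR^2;\oC)$ continuous; restriction to the ball $B_R$ combined with the compact Rellich--Kondrachov embedding $H^1(B_R)\hookrightarrow L^q(B_R)$ for any finite $q$, followed by multiplication by $V_R^{\rm in}\in L^p(B_R)$ from $L^q(B_R)$ into $L^2(B_R)$ (H\"older's inequality with $1/p+1/q=1/2$, $p>2$), shows that $V_R^{\rm in}(H-z)^{-1}$ is compact. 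Because the compact operators form a norm-closed subspace of $\mathcal{B}(L_2(\oR^2;\oC))$, $V(H-z)^{-1}$ is itself compact, and Weyl's theorem yields $\sigma_{\rm ess}(B_{\rm FW})=[mc^2,\infty)$, equal by unitary equivalence to $\sigma_{\rm ess}(B)$.

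The main obstacle I anticipate is a subtlety regarding the meaning of \emph{$H$-compact} in this setting, since $B_{\rm FW}$ has been introduced through its Friedrichs extension (Remark~\ref{Note02}) rather than as a genuine operator sum. The cleanest workaround is to work at the form level: prove instead that $|V|^{1/2}(H+1)^{-1/2}$ is compact, in fact Hilbert--Schmidt, by estimating its integral kernel using the explicit Fourier representation of $(H+1)^{-1/2}$ together with the $L^1(\oR^2)$-integrability and exponential tail of $K_0$. This delivers form-relative compactness of $V$ with respect to $H$ and the same conclusion, while sidestepping any pseudodifferential mapping technicalities.
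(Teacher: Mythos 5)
Your proposal is correct and follows essentially the same route as the paper's proof: reduce to $B_{\rm FW}=H+V$ via the unitary equivalence, split $V$ into a compactly supported piece plus a tail that is small in sup-norm, establish compactness of the local piece through Rellich's theorem, and conclude with Weyl's criterion applied to the resolvent difference. Your treatment of the logarithmic singularity at the origin (Rellich--Kondrachov into $L^q(B_R)$ followed by H\"older against $V_R^{\rm in}\in L^p$) is in fact slightly more careful than the paper's, which at one point treats the cut-off piece $V_1$ as if it were bounded, and your closing caveat about $B_{\rm FW}$ being defined as a form sum addresses a point the paper passes over silently.
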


The proof of Theorem \ref{SpecTheo1} depends fundamentally on the following

\begin{lemma}
The potential $V(\boldsymbol{x})=-\gamma K_0(\beta |\boldsymbol{x}|)  \in L_2(\oR^2)+L_\infty(\oR^2)_{\varepsilon}$,
where the $\varepsilon$ indicates that for any $\varepsilon > 0$, we can decompose $V=V_1+V_2$ with 
$V_1 \in L_2(\oR^2)$ and $V_2 \in L_\infty(\oR^2)$, with $\|V_2\|_\infty < \varepsilon$.
\label{SpecLemma1}
\end{lemma}

\begin{proof}
For any $\varepsilon > 0$, let $\chi_\varepsilon(|\boldsymbol{x}|)$ be the function that is $1$ on 
$\{\boldsymbol{x} \mid |\boldsymbol{x}| \leqslant (\gamma \varepsilon)^{-1}\}$ and that vanishes
outside $\{\boldsymbol{x} \mid |\boldsymbol{x}| < 2(\gamma \varepsilon)^{-1}\}$. Then, we decompose the
potential $V(\boldsymbol{x})$ as 
\[
V(\boldsymbol{x})=-\gamma \chi_\varepsilon(|\boldsymbol{x}|) K_0(\beta |\boldsymbol{x}|)
-\gamma \bigl(1-\chi_\varepsilon(|\boldsymbol{x}|)\bigr) K_0(\beta |\boldsymbol{x}|)
=V_1(\boldsymbol{x})+V_2(\boldsymbol{x})\,\,.
\]
By using the Eq.(\ref{Grads}), we obtain
\begin{align*}
\|V_1\|_2=\left(\gamma^2 \int_{\oR^2}
|\chi_\varepsilon(|\boldsymbol{x}|) K_0(\beta |\boldsymbol{x}|)|^2\,d^2\boldsymbol{x}\right)^{1/2}
=\left(2 \pi \gamma^2 \int_{0}^\infty \chi^2_\varepsilon(r) K^2_{0}(\beta r)\,r\,dr\right)^{1/2}
< \frac{\gamma \pi^{1/2}}{\beta}\,\,.
\end{align*}
Hence, $V_1 \in L_2(\oR^2)$. In the sequel we will use the asymptotic behavior of $K_0$, {\em i.e.},
\begin{align*}
K_0(\beta |\boldsymbol{x}|) \simeq \sqrt{\frac{\pi}{2 \beta |\boldsymbol{x}|}}~e^{-\beta |\boldsymbol{x}|}\,\,,
\quad |\boldsymbol{x}| \to \infty\,\,.
\end{align*}
This implies that $V_2$ is a bounded function vanishing at $\infty$. So it follows that
$V_2 \in L_\infty(\oR^2)$, with
\begin{align*}
\|V_2\|_\infty=\sup_{\boldsymbol{x} \in \oR^2} 
\bigl|\gamma \bigl(1-\chi_\varepsilon(|\boldsymbol{x}|)\bigr) K_0(\beta |\boldsymbol{x}|)\bigr|
< \varepsilon\,\,.
\end{align*}
Therefore, the potential $V(\boldsymbol{x})=-\gamma K_0(\beta |\boldsymbol{x}|) \in L_2(\oR^2)+L_\infty(\oR^2)_{\varepsilon}$.
\end{proof}

\begin{proof}[Proof of Theorem \ref{SpecTheo1}]
Let us start by defining $\boldsymbol{\cal B}_{\rm FW0} \equiv \boldsymbol{\cal H}_0=\sqrt{-\hbar^2 c^2\,\Delta+m^2 c^4}$.
From Remark \ref{Note01}, we know that $\sigma_{\rm ess}(\boldsymbol{\cal B}_{\rm FW0})=[mc^2,\infty)$. On the other hand,
in order to locate $\sigma_{\rm ess}(\boldsymbol{\cal B}_{\rm FW})$, where $\boldsymbol{\cal B}_{\rm FW}=\boldsymbol{\cal B}_{\rm FW0}+V$
(with $V(\boldsymbol{x})=-\gamma K_0(\beta |\boldsymbol{x}|)$), we study the resolvent operator
$(\boldsymbol{\cal B}_{\rm FW}-\lambda {\un})^{-1}$ for a some $\lambda \notin \sigma(\boldsymbol{\cal B}_{\rm FW})$.

By the second resolvent equation, for any value of $\lambda \in \rho(\boldsymbol{\cal B}_{\rm FW}) \cap \rho(\boldsymbol{\cal B}_{\rm FW0})$, where
$\rho(\boldsymbol{\cal B}_{\rm FW})$ and $\rho(\boldsymbol{\cal B}_{\rm FW0})$ are the resolvent sets of $\boldsymbol{\cal B}_{\rm FW}$
and $\boldsymbol{\cal B}_{\rm FW0}$, respectively,
we have
\begin{align}
(\boldsymbol{\cal B}_{\rm FW}-\lambda{\un})^{-1}-(\boldsymbol{\cal B}_{\rm FW0}-\lambda{\un})^{-1}
=(\boldsymbol{\cal B}_{\rm FW}-\lambda{\un})^{-1}V(\boldsymbol{\cal B}_{\rm FW0}-\lambda{\un})^{-1}\,\,,
\label{SIRHamil}
\end{align}
(we recall that since $\boldsymbol{\cal B}_{\rm FW}$ is a positive self-adjoint operator, once $\langle \psi,\boldsymbol{\cal B}_{\rm FW} \psi \rangle$
is bounded from below if $\gamma \leqslant \gamma_{\rm c}$, $\lambda \in \rho(\boldsymbol{\cal B}_{\rm FW})$ if and only if
$(\boldsymbol{\cal B}_{\rm FW}-\lambda{\un}):\boldsymbol{\mathfrak{Dom}}(\boldsymbol{\cal B}_{\rm FW}) \to {\mathscr H}_+$ is bijective and its inverse
is bounded). We will show that $(\boldsymbol{\cal B}_{\rm FW}-\lambda{\un})^{-1}-(\boldsymbol{\cal B}_{\rm FW0}-\lambda{\un})^{-1}$ is compact as an
operator on $L_2(\oR^2;\oC)$ and therefore $\sigma_{\rm ess}(\boldsymbol{\cal B}_{\rm FW})=\sigma_{\rm ess}(\boldsymbol{\cal B}_{\rm FW0})=[mc^2,\infty)$
by Weyl's criterion~\cite[Theorem XIII.14]{RS4}.

In view of the self-adjointness of $\boldsymbol{\cal B}_{\rm FW}$, the half-planes $\oC_\pm$ lie in $\rho(\boldsymbol{\cal B}_{\rm FW})$ and
$(\boldsymbol{\cal B}_{\rm FW}-\lambda{\un})^{-1}$ is bounded by
\begin{align*}
\|(\boldsymbol{\cal B}_{\rm FW}-\lambda{\un})^{-1}\| \leqslant |{\rm Im}\,\lambda|^{-1}\,\,,
\quad \lambda \notin \oR\,\,, 
\end{align*}
({\em cf.} Ref.\cite[Corollary 5.7]{HS}). Thus, it remains for us to show that $V(\boldsymbol{\cal B}_{\rm FW0}-\lambda{\un})^{-1}$ is compact. 

Taking into account the basic fact of inclusion~\cite[Chapter 7]{LM}, 
\begin{align*}
H^1(\oR^2;\oC) \subset H^{1/2}(\oR^2;\oC)\,\,, 
\end{align*}
{\em i.e.}, since the range of $(\boldsymbol{\cal B}_{\rm FW0}-\lambda{\un})^{-1}$, namely
$\boldsymbol{\mathfrak{Dom}}(\boldsymbol{\cal H}_0)=H^1(\oR^2;\oC)$,
lies in the form domain $\boldsymbol{\mathfrak{Q}}(\boldsymbol{\cal H}_0)=H^{1/2}(\oR^2;\oC)$, we just show that
$V(\boldsymbol{x})=-\gamma K_0(\beta |\boldsymbol{x}|)$ is a compact operator from $H^1(\oR^2;\oC)$ to $L_2(\oR^2;\oC)$.
This is enough to guarantee that the perturbation $V$ does not modify the essential spectrum of the operator $H$. For this
we will need the following compactness theorem, the proof of which can be found in~\cite[Theorem 1.10]{HiLa}.

\begin{theorem}[Rellich compactness criterion]
Let $H^1_0(\Omega)$ be the closure of $C^\infty_0(\Omega)$ in $H^1(\Omega)$, where $\Omega \subset \oR^d$ 
is an open and bounded set. Then for any bounded sequence $(\psi_\ell)_{\ell \in \oN}$ in $H^1_0(\Omega)$
there exists a subsequence $(\psi_{\ell_k})_{k \in \oN} \subset (\psi_\ell)_{\ell \in \oN}$ such that 
$(\psi_{\ell_k})_{k \in \oN}$ converges strongly in $L_2(\Omega)$.
\label{Rellich}
\end{theorem}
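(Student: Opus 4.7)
The plan is to establish the compactness through a Fourier-side Arzel\`a--Ascoli argument combined with a high-frequency tail estimate coming from the $H^1$-bound. Let $(\psi_\ell)_{\ell\in\oN} \subset H^1_0(\Omega)$ be bounded. Since each $\psi_\ell$ is the $H^1(\Omega)$-limit of functions in $C^\infty_0(\Omega)$, extension by zero produces $\widetilde{\psi}_\ell \in H^1(\oR^d)$, still bounded in the $H^1$-norm and supported in the fixed bounded set $\overline{\Omega}$. It then suffices to produce a subsequence of $(\widetilde{\psi}_\ell)$ that is Cauchy in $L_2(\oR^d)$, as the limit will automatically be supported in $\overline{\Omega}$.

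First I would exploit the compact support of $\widetilde{\psi}_\ell$ to control $\widehat{\widetilde{\psi}}_\ell$ pointwise on the Fourier side. Using the paper's Fourier convention together with Cauchy--Schwarz,
\[
|\widehat{\widetilde{\psi}}_\ell(\boldsymbol{p})| \leqslant \|\widetilde{\psi}_\ell\|_{L_1(\Omega)} \leqslant |\Omega|^{1/2}\|\widetilde{\psi}_\ell\|_2,
\]
and differentiating under the integral sign gives, again by Cauchy--Schwarz,
\[
|\boldsymbol{\nabla}_{\boldsymbol{p}}\,\widehat{\widetilde{\psi}}_\ell(\boldsymbol{p})| \leqslant \hbar^{-1}\,\|\boldsymbol{x}\,\widetilde{\psi}_\ell\|_{L_1(\Omega)} \leqslant \hbar^{-1}|\Omega|^{1/2}\sup_{\boldsymbol{x}\in\Omega}|\boldsymbol{x}|\,\|\widetilde{\psi}_\ell\|_2,
\]
both bounds uniform in $\ell$. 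Hence $\{\widehat{\widetilde{\psi}}_\ell\}$ is uniformly bounded and uniformly Lipschitz on $\oR^d$. Applying Arzel\`a--Ascoli on the balls $\{|\boldsymbol{p}|\leqslant R_k\}$ with $R_k \to \infty$ and extracting a diagonal subsequence, I would obtain a subsequence, still denoted $(\widehat{\widetilde{\psi}}_\ell)$, converging uniformly on every compact subset of $\oR^d$.

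To turn this into $L_2$ convergence I would invoke Plancherel, giving
\[
\|\widetilde{\psi}_\ell-\widetilde{\psi}_m\|_2^2 = \frac{1}{(2\pi\hbar)^d}\int_{\oR^d}|\widehat{\widetilde{\psi}}_\ell-\widehat{\widetilde{\psi}}_m|^2\,d\boldsymbol{p},
\]
and splitting the integral at a large radius $R$. The high-frequency tail is tamed by the uniform $H^1$-bound,
\[
\int_{|\boldsymbol{p}|>R}|\widehat{\widetilde{\psi}}_\ell-\widehat{\widetilde{\psi}}_m|^2\,d\boldsymbol{p} \leqslant R^{-2}\int_{\oR^d}|\boldsymbol{p}|^2|\widehat{\widetilde{\psi}}_\ell-\widehat{\widetilde{\psi}}_m|^2\,d\boldsymbol{p} \leqslant C R^{-2},
\]
which is arbitrarily small for large $R$, uniformly in $\ell,m$. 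For fixed $R$ the low-frequency piece $\int_{|\boldsymbol{p}|\leqslant R}|\widehat{\widetilde{\psi}}_\ell-\widehat{\widetilde{\psi}}_m|^2\,d\boldsymbol{p}$ tends to zero along the subsequence, thanks to the uniform convergence on the compact ball (which has finite measure). A standard $\varepsilon/2$-argument then shows the subsequence is Cauchy in $L_2(\oR^d)$, and its limit lies in $L_2(\Omega)$ by the support constraint.

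The main obstacle, and the only delicate coordination, is the interplay between the two scales: the compact $\boldsymbol{x}$-support is what upgrades the trivial $L^\infty$-bound on $\widehat{\widetilde{\psi}}_\ell$ to equicontinuity (making Arzel\`a--Ascoli applicable), while the $H^1$-bound is precisely what suppresses the $|\boldsymbol{p}|$-tail. Any alternative route, such as mollification followed by Arzel\`a--Ascoli in $\boldsymbol{x}$-space or a direct Friedrichs-type argument, must reconcile the same two ingredients; the Fourier route sketched here makes the split cleanest.
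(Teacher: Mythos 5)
The paper does not actually prove Theorem \ref{Rellich}: it is imported as a known result with a pointer to Hirsch--Lacombe \cite[Theorem 1.10, p.~355]{HiLa}. Your argument is therefore not competing with an in-paper proof but supplying one, and it is correct --- it is the classical Fourier-transform proof of the Rellich--Kondrachov theorem. Two points are worth flagging, and you handle both. First, the zero-extension step is precisely where the hypothesis $\psi_\ell\in H^1_0(\Omega)$ (rather than merely $H^1(\Omega)$) is used: for $\phi\in C^\infty_0(\Omega)$ extension by zero is an $H^1$-isometry into $H^1(\oR^d)$ and this passes to the closure, whereas for general $H^1(\Omega)$ functions zero-extension need not lie in $H^1(\oR^d)$ and one would need an extension operator requiring boundary regularity of $\Omega$; your proof correctly avoids any such assumption. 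Second, the division of labour is sound: the compact support of $\widetilde{\psi}_\ell$ yields the uniform bound and uniform Lipschitz estimate on $\widehat{\widetilde{\psi}}_\ell$ via $\|\boldsymbol{x}\,\widetilde{\psi}_\ell\|_{L_1}\leqslant |\Omega|^{1/2}\sup_{\boldsymbol{x}\in\Omega}|\boldsymbol{x}|\,\|\widetilde{\psi}_\ell\|_2$, which legitimizes Arzel\`a--Ascoli on balls plus a diagonal extraction, while the $H^1$ bound controls the tail $\int_{|\boldsymbol{p}|>R}$ uniformly in $\ell,m$; under the paper's Fourier convention the identity $\widehat{\partial_j\Psi}(\boldsymbol{p})=-i\hbar^{-1}p_j\widehat{\Psi}(\boldsymbol{p})$ and the Plancherel factor $(2\pi\hbar)^{-d}$ only change the constant $C$. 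The $\varepsilon/2$ split then gives an $L_2$-Cauchy subsequence whose limit is supported in $\overline{\Omega}$. I see no gap; the proof is complete and self-contained, which is arguably an improvement on the paper's bare citation.
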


\begin{remark}
$H^1_0(\Omega)$, with the scalar product $\langle\,\cdot\,,\,\cdot\,\rangle_{H^1(\Omega)}$ defined by
\begin{align*}
\langle \psi,\varphi \rangle_{H^1(\Omega)}=\langle \psi,\varphi \rangle_{L_2(\Omega)}
+\sum_{\kappa=1}^d\langle D_\kappa \psi,D_\kappa \varphi \rangle_{L_2(\Omega)}\,\,,
\end{align*} is a Hilbert space by construction, being a closed subspace of the Hilbert space
$H^1(\Omega)$ ({\em cf.}~\cite[Lemma 22.1]{Eskin}).
\end{remark}

Returning to the proof of Theorem \ref{SpecTheo1}, select a sequence $(\varphi_\ell)_{\ell \in \oN} \subset L_2(\oR^2;\oC)$
such that $\varphi_\ell \overset{\boldsymbol{\rm w}}{\longrightarrow} 0$. Assume that
\begin{align*}
\psi_\ell=(\boldsymbol{\cal B}_{\rm FW0}-\lambda{\un})^{-1}\varphi_\ell\,\,, 
\end{align*}
where $\psi_\ell \in \boldsymbol{\mathfrak{Dom}}(\boldsymbol{\cal B}_{\rm FW0})$. The fact that $(\varphi_\ell)_{\ell \in \oN}$ converges weakly
implies that $(\varphi_\ell)_{\ell \in \oN}$ is bounded in $L_2(\oR^2;\oC)$. As $(\boldsymbol{\cal B}_{\rm FW0}-\lambda{\un})^{-1}$ is bounded,
the sequence $(\psi_\ell)_{\ell \in \oN}$ is also bounded in $\boldsymbol{\mathfrak{Dom}}(\boldsymbol{\cal B}_{\rm FW0})$ and converges weakly
to zero in $\boldsymbol{\mathfrak{Dom}}(\boldsymbol{\cal B}_{\rm FW0}) \subset L_2(\oR^2;\oC)$.

Now, according to Lemma \ref{SpecLemma1}, for any $\varepsilon > 0$, we can decompose $V$ as $V=V_1+V_2$,
where $V_1 \in L_2(\oR^2)$ and $V_2 \in L_\infty(\oR^2)_\varepsilon$. Then, applying the H\"older Inequality, 
\begin{align}
\|V_2 \psi_\ell\|_2 \leqslant \|V_2\|_\infty \|(\boldsymbol{\cal B}_{\rm FW0}-\lambda{\un})^{-1}\varphi_\ell\|_2
\leqslant \varepsilon \|\varphi_\ell\|_2\,\,,
\label{4.13}
\end{align}
and it follows that $\|V_2(\boldsymbol{\cal B}_{\rm FW0}-\lambda{\un})^{-1}\| \leqslant \varepsilon$. Next, observe that $V_1$
is bounded so that there is $K \geqslant 0$ such that $|V_1(\boldsymbol{x})| \leqslant K$, for all $\boldsymbol{x} \in \oR^2$.
Let $B_R=\{\boldsymbol{x} \in \oR^2 \mid |\boldsymbol{x}| < R\}$ be the open ball with radius $R > 0$ around zero in
$\oR^2$ and $\theta_R \in C^\infty_0(B_{2R})$ be such that $0 \leqslant \theta_R \leqslant 1$,
$\theta_R \upharpoonright_{B_R}=1$. Then, $(\theta_R \psi_\ell)_{\ell \in \oN}$ is bounded in $H^1(B_{2R};\oC)$.
We recall that Rellich's Compactness Theorem gives us that the inclusion
\begin{align*}
H^1(B_{2R};\oC) \hookrightarrow L_2(B_{2R};\oC)\,\,,
\end{align*}
is compact. This means that there exists a subsequence $(\theta_R \psi_{\ell_k})_{k \in \oN} \subset (\theta_R \psi_\ell)_{\ell \in \oN}$
such that $(\theta_R \psi_{\ell_k})_{k \in \oN}$ converges strongly in $L_2(B_{2R};\oC)$. We shall show that
$\|V_1 \psi_\ell\|$ converges to 0. With the help of the function $\theta_R$ defined above, it follows that
\begin{align}
\|V_1 \psi_\ell\|_2 \leqslant \|V_1 \theta_R \psi_\ell\|_2+\|V_1 (1-\theta_R) \psi_\ell\|_2
\leqslant \|V_1 \theta_R \psi_\ell\|_2+\|V_1 (1-\theta_R)\| \|\psi_\ell\|_2\,\,.
\label{2.13}
\end{align}
The first term can be made smaller than $\varepsilon$ by choosing $\ell$ large since $V_1$ is bounded, {\em i.e.},
there is $\ell_0 \in \oN$ such that $\|\theta_R \psi_\ell\| \leqslant \varepsilon/K$ for all $\ell \geqslant \ell_0$. As $\psi_\ell$
converges weakly to zero, there is a positive constant $M$ such that $\|\psi_\ell\|_2 \leqslant M$. Hence, by assumption,
the second term can be made smaller than $\varepsilon$ times a positive constant by choosing $R$ large. Consequently,
for $\ell$ large enough, the right-hand side of (\ref{2.13}) is smaller than $\varepsilon$ times a positive constant. This
implies that $V_1 \psi_\ell \to 0$ in $L_2(\oR^2;\oC)$ as $\ell \to \infty$, since $\varepsilon$ is arbitrary, and hence
$V_1(\boldsymbol{\cal B}_{\rm FW0}-\lambda{\un})^{-1}$ is compact.

Finally, by (\ref{4.13}) we have
\begin{align*}
\|V(\boldsymbol{\cal B}_{\rm FW0}-\lambda{\un})^{-1}-V_1(\boldsymbol{\cal B}_{\rm FW0}-\lambda{\un})^{-1}\| < \varepsilon\,\,,
\end{align*}
so $V(\boldsymbol{\cal B}_{\rm FW0}-\lambda{\un})^{-1}$ is approximated by compact operators and is itself compact~\cite[Theorem 9.8]{HS}.
Thus, we have that $(\boldsymbol{\cal B}_{\rm FW}-\lambda{\un})^{-1}-(\boldsymbol{\cal B}_{\rm FW0}-\lambda{\un})^{-1}$
is equal to the product of a compact operator and a bounded operator, and since the product of a compact operator with a bounded
operator is compact~\cite[Theorem 9.5(b)]{HS}, this proves the required compactness on $L_2(\oR^2;\oC)$. This means that
$\boldsymbol{\cal B}_{\rm FW}$ and $\boldsymbol{\cal B}_{\rm FW0}$ have the same essential spectrum by Weyl's criterion, namely
$\sigma_{\rm ess}(\boldsymbol{\cal B}_{\rm FW})=\sigma_{\rm ess}(\boldsymbol{\cal B}_{\rm FW0})=[mc^2,\infty)$~\cite[Theorem XIII.14]{RS4}.
\end{proof}

Possible embedded eigenvalues in the essential spectrum, $\sigma_{\rm ess}(\boldsymbol{\cal B}_{\rm FW})$, are not stable.
Unlike isolated eigenvalues, which are relatively stable under perturbations, embedded eigenvalues generically disappear from
the spectrum of a self-adjoint operator under perturbations (actually, they do not disappear but, they become {\em resonances}
of the operator).\footnote{An excellent mathematical, comprehensive and self-contained analysis of quantum resonances can
be found in~\cite{HS}.} The study of resonances is closely connected with the absence of eigenvalues in the essential spectrum.
In particular, for operator $\boldsymbol{\cal B}_{\rm FW}$, there are no eigenvalues embedded in the interior of the essential spectrum
when $\gamma \leqslant \gamma_{\rm c}$, with $\gamma_{\rm c} \in \{\gamma_{\rm c}^{\cal S},\gamma_{\rm c}^{\cal H}\}$.
This result is proved with the help of the following

\begin{lemma}[An abstract virial theorem~\cite{BaEv}, Lemma 3.1.2]
Let ${\mathscr U}(a)$, $a \in \oR_+$ , be a one parameter family of unitary operators on a Hilbert space
${\mathscr H}$ which converges strongly to the identity as $a \to 1$. Let $T$ be a self-adjoint operator in 
${\mathscr H}$ and $T_a=f(a){\mathscr U}(a) T {\mathscr U}^{-1}(a)$ where $f(1)=1$ and
$f'(1)$ exists. If $\psi \in \boldsymbol{\mathfrak{Dom}}(T) \cap \boldsymbol{\mathfrak{Dom}}(T_a)$
is an eigenvector of $T$ corresponding to an eigenvalue $\lambda$ then
\begin{align*}
\lim_{a \to 1} \left\langle \psi_a,\left[\frac{T_a-T}{a-1}\right]\psi \right\rangle
=\lambda f'(1) \|\psi\|_2^2\,\,,
\end{align*}
where $\psi_a={\mathscr U}(a)\psi$.
\label{BaEvans}
\end{lemma}

The use of Lemma \ref{BaEvans} with $T=\boldsymbol{\cal B}_{\rm FW}$, $f(a)= a$, ${\mathscr U}(a)$ the unitary operator
defined by ${\mathscr U}(a)\psi(\boldsymbol{x})=a \psi(a\boldsymbol{x})$, combined with Theorem 3.2.3 in Ref~\cite{BaEv},
which can be adapted to the case of the operator $\boldsymbol{\cal B}_{\rm FW}$, guarantees the absence of discrete spectrum
in $[mc^2,\infty)$.

On the other hand, in many applications, a self-adjoint operator has a number of eigenvalues below the bottom of the
essential spectrum. In the sequel, we will show that the stable states of operator $\boldsymbol{\cal B}$ are represented by
eigenvalues lying in the gap of the essential spectrum of $\boldsymbol{\cal B}_{\rm FW}$, {\em i.e.},
in the interval $[0,mc^2)$. We recall that the eigenvalues of $\boldsymbol{\cal B}_{\rm FW}$
are characterized by the {\em min-max principle}. This theorem establishes that since $\boldsymbol{\cal B}_{\rm FW}$
is self-adjoint, and if $\lambda_1 \leqslant \lambda_2 \leqslant \lambda_3 \cdots$ are eigenvalues of
$\boldsymbol{\cal B}_{\rm FW}$ below the essential spectrum, respectively, the infimum of the essential spectrum,
once there are no more eigenvalues left, then
\begin{align*}
\lambda_n=\inf_{\psi,\ldots,\psi_{n-1}} \sup_{\psi \in \Omega(\psi_1,\ldots,\psi_n)} \langle \psi,\boldsymbol{\cal B}_{\rm FW} \psi \rangle\,\,,
\end{align*}
where $\Omega(\psi_1,\ldots,\psi_n)=\bigl\{\psi \in \boldsymbol{\mathfrak{Dom}}(\boldsymbol{\cal B}_{\rm FW}) \mid \|\psi\|_2=1,
\psi \in {\rm span}\{\psi_1,\ldots,\psi_n\}\bigr\}$. Hence, if there exists $\psi \in \boldsymbol{\mathfrak{Dom}}(\boldsymbol{\cal B}_{\rm FW})$
such that $\langle \psi,\boldsymbol{\cal B}_{\rm FW} \psi \rangle < mc^2$, then $\boldsymbol{\cal B}_{\rm FW}$ has at least one
eigenvalue below the bottom of the essential spectrum, $\sigma_{\rm ess}(\boldsymbol{\cal B}_{\rm FW})$. Indeed, if this were
not true then $\sigma(\boldsymbol{\cal B}_{\rm FW}) \cap (0,mc^2)=\varnothing$ meaning that
$\sigma(\boldsymbol{\cal B}_{\rm FW}) \subset [mc^2,\infty)$. By the spectral theorem, this would imply that
$\boldsymbol{\cal B}_{\rm FW} \geqslant mc^2$, {\em i.e.}, $\langle \psi,\boldsymbol{\cal B}_{\rm FW} \psi \rangle \geqslant mc^2$
for all $\psi \in \boldsymbol{\mathfrak{Dom}}(\boldsymbol{\cal B}_{\rm FW})$ in contradiction to the assumption
$\langle \psi,\boldsymbol{\cal B}_{\rm FW} \psi \rangle < mc^2$. Moreover,
\begin{align*}
\langle \psi,\boldsymbol{\cal B}_{\rm FW} \psi \rangle
&=\langle \psi,\sqrt{-\hbar^2 c^2\,\Delta+m^2 c^4}\;\psi \rangle
-\gamma \langle \psi,K_0(\beta|\boldsymbol{x}|) \psi \rangle \\[3mm]
&\geqslant mc^2 \|\psi\|_2^2-\frac{\gamma}{\gamma_{\rm c}}mc^2 \|\psi\|_2^2 \\[3mm]
&=mc^2 \left(1-\frac{\gamma}{\gamma_{\rm c}}\right) \|\psi\|_2^2\,\,.
\end{align*}
Therefore, for $\psi \in H^{1/2}(\oR^2;\oC)$ (with $\|\psi\|_2=1$) and
$\gamma \leqslant \gamma_{\rm c}$, we obtain
\begin{align*}
\langle \psi,\boldsymbol{\cal B}_{\rm FW} \psi \rangle
\geqslant mc^2 \left(1-\frac{\gamma}{\gamma_{\rm c}}\right)
\quad \text{(for $\gamma_{\rm c} \in \{\gamma_{\rm c}^{\cal S},\gamma_{\rm c}^{\cal H}\}$)}\,\,.
\end{align*}
Thus in $[0,mc^2)$, the spectrum of $\boldsymbol{\cal B}_{\rm FW}$ consists only of eigenvalues. The question
arises whether $\boldsymbol{\cal B}_{\rm FW}$ has a finite or infinite number of eigenvalues? To answer this question
it will be sufficient to take $\hbar=m=c=1$, when $m > 0$, because of scaling. As can be seen from (\ref{compare}),
the operator $\sqrt{-\Delta+1}-\gamma C_{1,\beta}|\boldsymbol{x}|^{-1}$ is an upper bound on the operator
$\boldsymbol{\cal B}_{\rm FW}$ (see (\ref{OpBRFWa})). In turn, taking into account that  $\sqrt{-\Delta+1} \leqslant -\Delta+1$
and that $-\Delta-\gamma C_{1,\beta}|\boldsymbol{x}|^{-1}$ has an infinite number of eigenvalues (see~\cite[Theorem XI.1.5]{EdEvans}),
this ensures the existence of infinitely many eigenvalues smaller than 1 (or rather, smaller than $mc^2$) for
$\boldsymbol{\cal B}_{\rm FW}$ for $\gamma \leqslant \gamma_{\rm c}$, with
$\gamma_{\rm c} \in \{\gamma_{\rm c}^{\cal S},\gamma_{\rm c}^{\cal H}\}$. In conclusion, all of these observations
prove the following

\begin{proposition}
Consider the operator $\boldsymbol{\cal B}_{\rm FW}$ defined by the form in (\ref{OpBRFWa}). If
$\gamma \leqslant \gamma_{\rm c}$, with $\gamma_{\rm c} \in \{\gamma_{\rm c}^{\cal S},\gamma_{\rm c}^{\cal H}\}$,
then
\begin{align*}
\langle \psi,\boldsymbol{\cal B}_{\rm FW} \psi \rangle
\geqslant mc^2 \left(1-\frac{\gamma}{\gamma_{\rm c}}\right)\,\,,
\quad \|\psi\|_2=1\,\,.
\end{align*}
In $[0,mc^2)$, the discrete spectrum $\{\lambda_n\}_{n \geqslant 1} \in \sigma_{\rm disc}(\boldsymbol{\cal B}_{\rm FW})
=\sigma_{\rm disc}(\boldsymbol{\cal B})$, consists of an infinite number of isolated eigenvalues of finite multiplicity.
Moreover, since the eigenvalues $\lambda_n$ are all smaller than $mc^2=\inf\,\{\sigma_{\rm ess}(\boldsymbol{\cal B}_{\rm FW})\}$
and since eigenvalues can accumulate only at the essential spectrum, we have that
\begin{align*}
\lambda_1 \leqslant \cdots \leqslant \lambda_{n-1} \leqslant \lambda_n \to \inf\,\{\sigma_{\rm ess}(\boldsymbol{\cal B}_{\rm FW})\}
=mc^2 \quad \text{for} \quad n \to +\infty\,\,.
\end{align*}
\label{discAA}
\end{proposition}

\begin{remark}
Motivated by the conduction properties of graphene discovered and studied in the last decades, in Proposition \ref{discAA},
the velocity of light $c$ could be replaced by the Fermi velocity in graphene $v_{\rm F} \approx 10^6\,m/s=c/300$.
In this case, the interval $[0, mv_{\rm F}^2)$ is associated a mass-gap effect~\cite{Gusynin,Fuhrer}, observed in pure
monolayer graphene on substrates. At this point, we also want to mention a physical situation where resonances have been
shown to exist. Considering the problem of impurity states in the vicinity of the mass-gap on graphene, taking into account
the {\em spin-orbit (SO) interaction}, Inglot-Dugaev~\cite{Inglot} have shown that even though the internal SO interaction
is relatively small, its effect is crucial because a very small perturbation potential can create both discrete and resonance
impurity states located near the gap. 
\end{remark}

\,\,\,We conclude with

\begin{proposition}
The operator $\boldsymbol{\cal B}_{\rm FW}$ defined by the form in (\ref{OpBRFWa}) has no eigenvalue at 0 if
$\gamma=\gamma_{\rm c}^{\cal H}$.
\end{proposition}

\begin{proof}
Suppose that 0 is an eigenvalue of $\boldsymbol{\cal B}_{\rm FW}$ with corresponding eigenfunction $\psi$. Since the right-hand side
of 
\begin{align*}
\langle \psi,\boldsymbol{\cal B}_{\rm FW} \psi \rangle
=\langle \psi,\sqrt{-\hbar^2 c^2\,\Delta+m^2 c^4}\;\psi \rangle
-\gamma_{\rm c}^{\cal H} \langle \psi,K_0(\beta|\boldsymbol{x}|) \psi \rangle\,\,,
\end{align*}
is non-negative, it must be zero. But this would imply that there is equality in (\ref{LS1}) with $\psi \not= 0$, 
which is not possible.
\end{proof}

%%%%%%%%%%%%%%%%%%%%%%%%%%%%%%%%%%%%%%%%%%%%%%%%%%%%%%%%%%%%%%%%
\section*{Author's Contributions}
\hspace*{\parindent}
%%%%%%%%%%%%%%%%%%%%%%%%%%%%%%%%%%%%%%%%%%%%%%%%%%%%%%%%%%%%%%%%
All authors contributed equally to this work. On behalf of all authors, the corresponding author states that
there is no conflict of interest. 

%%%%%%%%%%%%%%%%%%%%%%%%%%%%%%%%%%%%%%%%%%%%%%%%%%%%%%%%%%%%%%%%
\section*{Data Availability}
\hspace*{\parindent}
%%%%%%%%%%%%%%%%%%%%%%%%%%%%%%%%%%%%%%%%%%%%%%%%%%%%%%%%%%%%%%%%
The data that support the findings of this study are available from the corresponding author upon reasonable request.

%%%%%%%%%%%%%%%%%%%%%%%%%%%%%%%%%%%%%%%%%%%%%%%%%%%%%%%%%%%%%%%%%%%%%%%%%%%%%%
%\newpage 
%%%%%%%%%%%%%%%%%%%%%%%%%%%%%%%%%%%%%%%%%%%%%%%%%%%%%%%%%%%%%%%%%%%%%%%%%%%%%%
%\renewcommand{\theequation}{\Alph{section}.\arabic{equation}}
%\renewcommand{\thesection}{\Alph{section}}
\renewcommand{\theequation}{A.\arabic{equation}}
\renewcommand{\thesection}{A}
\setcounter{equation}{0}

%%%%%%%%%%%%%%%%%%%%%%%%%%%%%%%%%%%%%%%%%%%%%%%%%%%%%%%%%%%%%%%%
\section*{Appendix A: Boundedness from below via an HSM-type  inequality}
\hspace*{\parindent}
%%%%%%%%%%%%%%%%%%%%%%%%%%%%%%%%%%%%%%%%%%%%%%%%%%%%%%%%%%%%%%%%
We address in this appendix the existence of an inequalitiy involving the Hardy and Sobolev relativistic inequalities
applied to the Brown-Ravenhall operator $\boldsymbol{\cal B}_{\rm FW}$ with an attractive potential of the Bessel-Macdonald
type. This inequality, that we will call Hardy-Sobolev-Maz'ya type inequality (or HSM-type inequality for brevity)~\cite{BaEvLe},
combines the relativistic Hardy inequality and relativistic Sobolev inequality. Indeed, from the Hardy and Sobolev relativistic
inequalities for $\oR^2$
\begin{align*}
&{\cal H}_2^{-2} \int_{\oR^2} \frac{|\psi(\boldsymbol{x})|^2}{|\boldsymbol{x}|}\;d\boldsymbol{x}
\leqslant \frac{1}{\hbar c} \langle \psi. \sqrt{-\hbar^2 c^2 \Delta}\;{\psi} \rangle
\quad \text{and} \quad
{\cal S}_{2,1/2}^{-1} \|\psi\|_4^2 \leqslant 
\frac{1}{\hbar c} \langle \psi, \sqrt{-\hbar^2 c^2 \Delta}\;{\psi} \rangle\,\,,
\end{align*}
it follows that for $0 < \alpha \leqslant {\cal H}_2^{-2}$
\begin{align}
\frac{1}{\hbar c} \langle \psi, \sqrt{-\hbar^2 c^2 \Delta}\;{\psi} \rangle
-\alpha \int_{\oR^2} \frac{|\psi(\boldsymbol{x})|^2}{|\boldsymbol{x}|}\;d\boldsymbol{x}
&\geqslant
\frac{1}{\hbar c} \left(1-\alpha {\cal H}_2^2\right) \langle \psi, \sqrt{-\hbar^2 c^2 \Delta}\;{\psi} \rangle \nonumber \\[3mm]
&\geqslant \left(1-\alpha {\cal H}_2^2\right) {\cal S}_{2,1/2}^{-1} \|\psi\|_4^2\,\,,
\quad \psi \in H^{1/2}(\oR^2;\oC)\,\,.
\label{HSM}
\end{align}
that is,
\begin{align*}
\frac{1}{\hbar c} \langle \psi, \sqrt{-\hbar^2 c^2 \Delta}\;{\psi} \rangle
-\alpha \int_{\oR^2} \frac{|\psi(\boldsymbol{x})|^2}{|\boldsymbol{x}|}\;d\boldsymbol{x}
-\left(1-\alpha {\cal H}_2^2\right) {\cal S}_{2,1/2}^{-1} \|\psi\|_4^2 \geqslant 0\,\,,
\quad \psi \in H^{1/2}(\oR^2;\oC)\,\,.
\end{align*}
The inequality (\ref{HSM}) says that even when Hardy is subtracted, the operator
$\sqrt{-\hbar^2 c^2 \Delta}$ is still powerful enough to dominate the square of an $L_4$-norm. Therefore,
with this result, for the Brown-Ravenhall operator $\boldsymbol{\cal B}_{\rm FW}$ with an attractive potential of the
Bessel-Macdonald type it follows that
\begin{align}
\langle \psi,K_0(\beta |\boldsymbol{x}|) \psi \rangle
&\leqslant {\cal C}_{\alpha}^{-1} \frac{1}{\hbar c \beta}\;\langle \psi, \sqrt{-\hbar^2 c^2 \Delta}\;{\psi} \rangle \nonumber \\[3mm]
&\leqslant {\cal C}_{\alpha}^{-1} \frac{1}{\hbar c \beta}\;\langle \psi, \sqrt{-\hbar^2 c^2 \Delta+m^2 c^4}\;{\psi} \rangle\,\,,
\label{HSM1}
\end{align}
where
\begin{align*}
{\cal C}_{\alpha}=\alpha \frac{\sqrt{2 e}}{[\Gamma(1/2)]^2}
+\frac{1}{\sqrt{\pi}} \left(1-\alpha {\cal H}_2^2\right) {\cal S}_{2,1/2}^{-1}\,\,.
\end{align*}
Although the constant ${\cal C}_{\alpha}$ is not sharp, we hope to find the best constant ${\cal C}_{\alpha}$
optimizing over the choice of $\alpha$. Consequently, the inequality (\ref{HSM1})
shows that operator $\boldsymbol{\cal B}_{\rm FW}$ is bounded from below if
\begin{align*}
\left(1-\gamma\;{\cal C}_{\alpha}^{-1} \frac{1}{\hbar c \beta}\right) 
\left\langle \psi,\sqrt{-\hbar^2 c^2\,\Delta+m^2 c^4}\;\psi \right\rangle \geqslant 0\,\,,
\end{align*}
In other words, $\langle \psi,\boldsymbol{\cal B}_{\rm FW} \psi \rangle$ is lower bounded if
\begin{align*}
\gamma \leqslant \gamma_{\rm c}^{\cal HSM}={\cal C}_{\alpha}\;\hbar c \beta\,\,.
\end{align*}

%%%%%%%%%%%%%%%%%%%%%%%%%%%%%%%%%%%%%%%%%%%%%%%%%%%%%%%%%%%%%%%%%%%%%%%%%%%%%%
%\newpage 
%%%%%%%%%%%%%%%%%%%%%%%%%%%%%%%%%%%%%%%%%%%%%%%%%%%%%%%%%%%%%%%%%%%%%%%%%%%%%%
%\renewcommand{\theequation}{\Alph{section}.\arabic{equation}}
%\renewcommand{\thesection}{\Alph{section}}
\renewcommand{\theequation}{B.\arabic{equation}}
\renewcommand{\thesection}{B}
\setcounter{equation}{0}

%%%%%%%%%%%%%%%%%%%%%%%%%%%%%%%%%%%%%%%%%%%%%%%%%%%%%%%%%%%%%%%%
\section*{Appendix B: Bessel potential}
\hspace*{\parindent}
%%%%%%%%%%%%%%%%%%%%%%%%%%%%%%%%%%%%%%%%%%%%%%%%%%%%%%%%%%%%%%%%
For the reader's possible interest, in this appendix, we have gathered some facts about the
spaces of Bessel potentials. Let us recall that the operator $(-\Delta+{\un})^{-s/2}$, for $s > 0$,
is called {\em Bessel potential operator}. Thus, given any $s > 0$, the Bessel potencial $G_s$ is
defined to be that function whose Fourier transform $\widehat{G}_s$ is given by
\[
\widehat{G}_s(\boldsymbol{p})=\frac{1}{(1+|\boldsymbol{p}|^2)^{s/2}}\,\,,
\quad \text{with $\boldsymbol{p} \in \oR^d$}\,\,.
\]

The following is a simple proof of this result for the $K_0$-potential in $\oR^2$. The proof is based on the spherical
symmetry of the $K_0$-function.

\begin{proposition}
Given that $K_0$-potential is spherically symmetric, then
\begin{align*}
\bigl[{\mathscr F}K_0\bigr](\boldsymbol{p})=\widehat{K}_0(\boldsymbol{p})
=\frac{2\pi \hbar^2}{|\boldsymbol{p}|^2+\hbar^2 \beta^2}\,\,.
\end{align*}
\label{FTK0}
\end{proposition}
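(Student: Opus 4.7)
The plan is to exploit the radial symmetry of $K_0(\beta |\boldsymbol{x}|)$ in order to reduce the two-dimensional Fourier integral to a one-dimensional Hankel-type integral. Writing $p=|\boldsymbol{p}|$ and choosing polar coordinates $(r,\theta)$ in $\oR^2$ with $\theta$ measured from the direction of $\boldsymbol{p}$, the inner product becomes $\boldsymbol{p}\cdot\boldsymbol{x}=pr\cos\theta$, and the Fourier convention adopted in Section~\ref{Sec1} gives
\begin{align*}
\widehat{K}_0(\boldsymbol{p}) = \int_0^\infty \int_0^{2\pi} K_0(\beta r)\, e^{i\hbar^{-1}pr\cos\theta}\, r\, d\theta\, dr .
\end{align*}

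First I would eliminate the angular variable via the classical integral representation of the Bessel function of the first kind, $\int_0^{2\pi} e^{i\alpha \cos\theta}\, d\theta = 2\pi J_0(\alpha)$, which reduces the transform to
\begin{align*}
\widehat{K}_0(\boldsymbol{p}) = 2\pi \int_0^\infty K_0(\beta r)\, J_0(\hbar^{-1} pr)\, r\, dr .
\end{align*}
Then I would invoke the standard Hankel-transform identity (\emph{e.g.}~Gradshteyn-Ryzhik 6.532)
\begin{align*}
\int_0^\infty K_0(\beta r)\, J_0(kr)\, r\, dr = \frac{1}{k^2 + \beta^2}, \qquad k,\beta>0,
\end{align*}
specialized to $k=p/\hbar$. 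Substituting into the previous display yields $\widehat{K}_0(\boldsymbol{p}) = 2\pi/\bigl((p/\hbar)^2+\beta^2\bigr) = 2\pi \hbar^2/(p^2 + \hbar^2 \beta^2)$, which is the asserted formula.

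As an alternative path, should one prefer not to quote a tabulated Hankel identity, one can use the fact that $(2\pi)^{-1} K_0(\beta |\boldsymbol{x}|)$ is the fundamental solution of the screened Laplacian $-\Delta + \beta^2$ on $\oR^2$; applying the Fourier transform to the distributional equation $(-\Delta+\beta^2) G = \delta$ under the convention of the paper sends $-\Delta$ to $\hbar^{-2} p^2$ and $\delta$ to $1$, giving $\widehat{G}(\boldsymbol{p}) = \hbar^2/(p^2+\hbar^2\beta^2)$, and a multiplication by $2\pi$ reproduces the claim. I do not anticipate any substantive obstacle: the analytic content is the classical Hankel identity recalled above, and the only real care required is in tracking the factors of $\hbar$ produced by the non-standard Fourier convention.
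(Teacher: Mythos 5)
Your argument is correct and follows essentially the same route as the paper: reduce the two-dimensional Fourier integral by radial symmetry to the Hankel-type integral $2\pi\int_0^\infty K_0(\beta r)J_0(\hbar^{-1}pr)\,r\,dr$ via the angular representation of $J_0$, and then evaluate it by a tabulated Gradshteyn--Ryzhik identity. The factors of $\hbar$ are tracked correctly, so nothing further is needed.
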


\begin{proof}
Using our Fourier transform convention displayed in Eq.(\ref{FouTrans}), we started by writing
\begin{align*}
\widehat{K}_0(\boldsymbol{p})
=\int_{-\infty}^\infty \int_{-\infty}^\infty K_0\bigl(\beta \sqrt{x_1^2+x_2^2}\bigr) 
e^{i\hbar^{-1}(p_1 x_1+p_2 x_2)} dx_1dx_2\,\,.
\end{align*}
For a potential depending only upon $r$ (central force field) it is expedient to introduce polar coordinates 
by the formulae $x_1=r \cos \theta$, $x_2=r \sin \theta$, with $r=|\boldsymbol{x}|$, and
similarly in the momentum domain by the formulae $p_1=p \cos \varphi$, $p_2=p \sin \varphi$, with
$p=|\boldsymbol{p}|$. It then follows that the Fourier transform in $d=2$ can be written as
\begin{align}
\widehat{K}_0(\boldsymbol{p})&=\int_0^\infty \int_{-\pi}^{\pi} K_0(\beta r)
e^{i\hbar^{-1}pr \cos(\varphi-\theta)} rdrd\theta \nonumber \\[3mm]
&=\int_0^\infty K_0(\beta r) rdr
\int_{-\pi}^{\pi} e^{i\hbar^{-1}pr \cos(\varphi-\theta)} d\theta\,\,.
\label{FTK01}
\end{align}
Using the integral definition of the zeroth-order Bessel function,
\begin{align*}
J_0(x)=\frac{1}{2\pi} \int_{-\pi}^{\pi} e^{ix \cos(\varphi-\theta)} d\theta
=\frac{1}{2\pi} \int_{-\pi}^{\pi} e^{ix \cos \eta} d\eta\,\,,
\end{align*}
Eq.(\ref{FTK01}) can then be written as
\begin{align*}
\widehat{K}_0(\boldsymbol{p})=2\pi \int_0^\infty K_0(\beta r) J_0(\hbar^{-1}pr)~r\,dr\,\,.
\end{align*}
In this way, the conclusion follows from the Table of Integrals of Gradshtein-Ryzhik~\cite[{\bf 6.521}, $2.^{10}$, p.665]{GR}.
\end{proof}

We quote below without proof the basic properties of $G_s$ relevant to our development 
(more details can be found in Refs.~\cite{Aron,Elias,Ziemer,Adams}): 

$(i)$ if $s > 0$, then $G_s$ is a positive function in $L_1(\oR^d)$ which is analytic except at $0$ and is given by
\begin{align}
G_s(\boldsymbol{x})=\frac{1}{2^{\frac{d+s-2}{2}}\pi^{d/2}\Gamma(s/2)}
K_{\frac{d-s}{2}}(|\boldsymbol{x}|)\;|\boldsymbol{x}|^{\frac{s-d}{2}}\,\,,
\label{IntRepBF0}
\end{align}
where $K_{\frac{d-s}{2}}$ is the modified Bessel function of the second kind also called Bessel-MacDonald function and
$\Gamma$  denotes the Gamma function. The Bessel kernel can also be represented for 
$\boldsymbol{x} \in \oR^d \setminus \{0\}$ by the integral formula
\begin{align}
G_s(\boldsymbol{x})={\frac {1}{(4\pi )^{s/2}\Gamma(s/2)}}
\int _{0}^{\infty } e^{-{\frac {\pi |\boldsymbol{x}|^{2}}{\eta}}}
e^{-\frac{\eta}{4\pi}}~\eta^{-\left(1+{\frac {d-s}{2}}\right)}\,d\eta\,\,.
\label{IntRepBF}
\end{align}

$(ii)$ $G_s * G_\tau=G_{s+\tau}$ if $\tau > 0$.

$(iii)$ as $|\boldsymbol{x}| \to 0$,
\begin{align*}
G_{s}(\boldsymbol{x}) \simeq
\begin{cases}
\displaystyle{{\frac{\Gamma((d-s)/2)}{2^{s}\pi^{d/2}\Gamma(s/2)}}}\;|\boldsymbol{x}|^{s-d} \quad {\text{if}} \quad 0 < s < d\,\,, \\[5mm]
\displaystyle{{\frac{1}{2^{d-1}\pi^{d/2}\Gamma(d/2)}} \ln{\frac{1}{|\boldsymbol{x}|}}} \quad {\text{if}} \quad s=d\,\,, \\[5mm]
\displaystyle{{\frac{\Gamma((d-s)/2)}{2^{d}\pi^{d/2}\Gamma(s/2)}}} \quad {\text{if}} \quad s > d\,\,.
\end{cases}
\end{align*}

$(iv)$ as $|\boldsymbol{x}| \to \infty$,
\begin{align*}
G_{s}(\boldsymbol{x})
\simeq \frac{1}{2^{\frac{d+s-1}{2}}\pi^{\frac{d-1}{2}}\Gamma(s/2)}\;|\boldsymbol{x}|^{\frac{s-d-1}{2}}\;e^{-|\boldsymbol{x}|}\,\,.
\end{align*}

$(v)$ there exists $c > 0$ such that for all $\boldsymbol{x} \in \oR^d$ and all $s \in (0,d)$
\begin{align*}
G_{s}(\boldsymbol{x})
\simeq |\boldsymbol{x}|^{s-d}\;e^{-c|\boldsymbol{x}|}\,\,.
\end{align*}

Closely related to the operator $(-\Delta+{\un})^{-s/2}$ is the {\em Riesz potential operator}, $(-\Delta)^{-s/2}$, 
which has an integral convolution kernel of the form
\begin{align*}
G_{s}(\boldsymbol{x})=\frac{\Gamma((d-s)/2)}{2^s \pi^{d/2}\Gamma(s/2)}\;|\boldsymbol{x}|^{s-d}\,\,, 
\quad \text{if} \quad 0 < s < d\,\,.
\end{align*}

The Bessel potential is a potential similar to the Riesz potential but with better decay properties at infinity.
Comparatively, the Yukawa potential is a particular case of a Bessel potential for $s=2$ in $d=3$,
while the Coulomb potential is an example of a Riesz potential also in $d=3$. Note that according to properties 
$(iii)$ and $(iv)$, for $s=d=2$, the $K_0$-potential behaves as if it were the Yukawa 
potential in $d=2$ ({\em cf.}~\cite[p.1006, Eq.(2.21a)]{Peter}). In~\cite{CoTa}, Cotsiolis-Tavoularis obtained the
best best constants for the inequalities with Riesz, Bessel and Yukawa potential operators.

%%%%%%%%%%%%%%%%%%%%%%%%%%%%%%%%%%%%%%%%%%%%%%%%%%%%%%%%%%%%%%%%

%%%%%%%%%%%%%%%%%%%%%%%%%%%%%%%%%%%%%%%%%%%%%%%%%%%%%%%%%%%%%%%%%%%%%%%%%%%%%%

%%%%%%%%%%%%%%%%%%%%%%%%%%%%%%%%%%%%%%%%%%%%%%%%%%%%%%%%%%%%%%%%%%%%%%%%%%%%%%
\end{document}